\documentclass[11pt]{article}

\usepackage{fullpage}
\usepackage{graphicx}

\usepackage{epsf}
\usepackage{epsfig}
\usepackage{amsfonts}
\usepackage{amssymb}
\usepackage{latexsym}
\usepackage[all]{xy}
\usepackage[boxed,linesnumbered]{algorithm2e}

\newtheorem{theorem}{Theorem}
\newtheorem{fact}{Fact}
\newtheorem{note}{Note}
\newtheorem{definition}[theorem]{Definition}
\newtheorem{proposition}[theorem]{Proposition}
\newtheorem{lemma}[theorem]{Lemma}
\newtheorem{claim}[theorem]{Claim}
\newenvironment{proof}{\noindent{\bf Proof:}}{\qed\smallskip}
\newtheorem{corollary}[theorem]{Corollary}

\newcommand{\Log}{\mbox{{\sf L}}}

\newcommand{\PosSLP}{\mbox{{\sf PosSLP }}}

\newcommand{\FTCz}{\mbox{\sf FTC}$^0$}
\newcommand{\TCLL}{\mbox{\sf TCLL}}
\newcommand{\FTCLL}{\mbox{\sf TCLL}}
\newcommand{\FOLL}{\mbox{\sf FOLL}}

\newcommand{\CeqNC}{\mbox{{\sf C}$_{=}${\sf NC}$^1$}}

\newcommand{\NP}{\mbox{{\sf NP}}}

\newcommand{\GNC}{\mbox{{\sf GapNC}$^1$}}

\newcommand{\AC}{\mbox{{\sf AC}}}

\newcommand{\ACz}{\mbox{{\sf AC}$^0$}}
\newcommand{\FACz}{\mbox{{\sf FAC}$^0$}}

\newcommand{\TCz}{\mbox{{\sf TC}$^0$}}

\newcommand{\SLP}{\mbox{{\sf SLP}}}

\newcommand{\PP}{\mbox{{\sf PP}}}
\newcommand{\PH}{\mbox{{\sf PH}}}

\newcommand{\qed}{\rule{7pt}{7pt}}

\newcommand{\comment}[1]{}

%algorithm2e macros
\SetKwInOut{Input}{Input}
\SetKwInOut{Output}{Output}
\SetKwInOut{Promise}{Promise}

\title{Computing  Bits of Algebraic Numbers}
\author{Samir Datta \\
\texttt{sdatta@cmi.ac.in}\\
Chennai Mathematical Institute\\
  Chennai, India\\
\and Rameshwar Pratap \\
\texttt{rameshwar@cmi.ac.in}\\
Chennai Mathematical Institute\\
  Chennai, India\\
}
%\Copyright[nc-nd]{Samir Datta and Rameshwar Pratap}

%\subjclass{Dummy classification}% mandatory: Please choose ACM 1998 classifications from http://www.acm.org/about/class/ccs98-html . E.g., cite as "F.1.1 Models of Computation".
%\keywords{Algebraic Numbers, Numerical Analysis, Computational Complexity}% mandatory: Please provide 1-5 keywords

\begin{document}
\maketitle
\begin{abstract}
We initiate the complexity theoretic study of the problem of computing the
bits of (real) algebraic numbers. 
This extends the work of Yap on computing the bits of transcendental numbers like
$\pi$, in Logspace.
%This extends and refines the work of Yap on computing the bits of $\pi$ in Logspace. 

Our main result is that computing a bit of a fixed real algebraic number is in
\CeqNC$\subseteq \Log$ when the bit position has a verbose (unary) 
representation and in the counting hierarchy when it has a succinct (binary) 
representation.

Our tools are drawn from elementary analysis and numerical analysis, and include
 the Newton-Raphson method.  The proof of our main result
is entirely elementary, preferring to use the elementary Liouville's theorem
over the much deeper Roth's theorem for algebraic numbers.

We leave the possibility of proving non-trivial lower bounds for the
problem of computing the bits of an algebraic number given the bit position
in binary, as our main open question. In this direction we show very
limited progress by proving a lower bound for \emph{rationals}.
\end{abstract}
\newpage
\section{Introduction}
%Definition of algebraic numbers
Algebraic numbers are (real or complex) roots of finite degree polynomials with
integer coefficients. Needless to say, they are fundamental objects and play
in important part in all of Mathematics. 

% Rationals vs. Algebraic numbers vs. Transcendental Numbers
%%Galois Theory, one of the pillars of Algebra marks the gap between 
%%rational and irrational algebraic numbers. On the other hand,
%%Transcendental Number Theory aims to demarcate the abyss between algebraic 
%%numbers and transcendental numbers.

% Computing bits of irrationals
%%\subsection{Motivation}
The equivalence between reals with recurring binary expansions (or expansions in
any positive integral radix) and rationals is easy to observe. Thus computing
the bits of fixed rationals is computationally uninteresting. However, the
problem becomes interesting if we focus on irrational real numbers. 
Computability of such numbers heralded the birth of of Computer Science in Turing's landmark
paper \cite{turing} where
the computability of the digits of irrationals like $\pi, e$ is first addressed.

Building on the surprising Bailey–Borwein–Plouffe (BBP) formula 
\cite{BaileyBP} for $\pi$, Yap \cite{Yap}
shows that certain transcendental numbers
such as $\pi$ have binary expansions computable in a small complexity class
like deterministic logarithmic space. Motivated by this result we seek to 
answer the corresponding question for algebraic numbers. The answers we get
turn out to be unsatisfactory but intriguing in many respects. In a nutshell,
we are able to show only a very weak upper bound to the ``succinct'' 
version of the problem and virtually no lower bounds. This gap between best 
known (at least to our knowledge) upper and lower bounds easily beats other 
old hard-to-crack chestnuts such as graph isomorphism and integer 
factorization.

% Unary vs. Binary
\subsection{Versions of the Problem}
%Even before describing our results we must describe what we mean by 
%``succinct'' version of the bit sequence problem. 
The problem 
as stated in \cite{Yap} asks for the $n$-th bit of the (infinite) binary
sequence of an irrational real given $n$ in \emph{unary}. 
%We posit that the `correct' version 
The succinct version of the problem asks for the $n$-th bit, given
$n$ in \emph{binary}. This version of the problem is naturally much
harder than the ``verbose'' version. We can solve the verbose version in \CeqNC,
a subclass of  logspace. For the succinct version of the problem we are unable
to prove a deterministic polynomial or even a non-deterministic polynomial
upper bound. The best we can do is place it at a finite level in the 
counting hierarchy (which includes the computation of the permanent
at its first level). Even more surprising is that we are unable to prove 
%even an \ACz \,\ lower
any non-trivial lower bound for any irrational algebraic number.
Intriguingly, we can prove Parity and in general $\ACz[p]$ lower bounds for 
computing specific \emph{rationals}.

% Proof techniques
\subsection{Previous Proof Techniques}
In his article \cite{Yap}, Yap used a BBP like \cite{BBP} series to prove a
logspace upper bound for the (verbose version of) computing the bits of $\pi$. 
At the core of that
argument is the concept of bounded irrationality measure of $\pi$ 
which intuitively measures how inapproximable $\pi$  is, by rationals.
Roughly, the BBP-like series was used to approximate $\pi$ by rationals
and then argue, via the bounded irrationality measure, that, since there aren't
too many good approximations to $\pi$, the computed one must match
the actual expansion to lots of bit positions.

%% Newton Raphson
\subsection{Our proof technique}
Further progress was stymied by the extant ignorance of BBP like 
series for
most well-known irrationals. Our crucial observation is that approximating
an irrational can be accomplished by means other than a BBP-like series for
instance by using Newton-Raphson. Bounded irrationality measure for algebraic
numbers follows by a deep theorem of Roth \cite{Roth}. But we show that we
can keep our proof elementary by replacing Roth's theorem by Liouville's
Theorem \cite{Liou} which has a simple and elementary proof (see e.g. 
\cite{Shid}).

%% Relation with BitSLP
An upper bound on the succinct version of the problem follows by observing that
Newton Raphson %is non-adaptive it can be viewed as %an arithmetic
%circuit. Thus essentially our problem boils down to determining a bit of
%the value output by a  polynomial sized arithmetic circuit. This problem 
%has been studied in literature under the name of BitSLP \cite{AllenderEtAl},
%where it is shown how to place it in the counting hierarchy and we are simply
%able to use their upper bound.
can be viewed as approximating the algebraic number by a rational which is 
the ratio of two \emph{Straight Line Programs} or \emph{SLP}'s.
Allender et al. \cite{AllenderEtAl} show how to compute bits of a single SLP in the
Counting Hierarchy. We extend their proof technique to solve the problem of
computing a bit of the ratio of two SLP's in the Counting Hierarchy.

%%? For Liouville Numbers
% Lower bounds
%% Automatic numbers
%% For rationals
%% For algebraic numbers 
\subsection{Related Work and Our Results}
Yap \cite{Yap} showed that the bits of $\pi$ are in Logspace. This was the
origin of this endeavour and we are able to refine his result to the following:
\begin{theorem}\label{thm:bbpIrrationals}
Let $\alpha$ be a real number with bounded irrationality measure, which can be expressed
as a convergent series and further the the $m^{th}$ term (for input $m$ in unary) is in \FTCz.
\comment{
Let $\alpha$ be a real number satisfying the following:
\begin{enumerate}
 \item $\alpha$ has bounded irrationality measure.
 \item $\alpha$ can written as convergent series 
 \item the $n^{th}$ term (for input $n$ in unary) is in \FTCz.
\end{enumerate}
}
Then the $n^{th}$ bit of $\alpha$ can be computed by 
a \TCz\, circuit for $n$ in unary and in $\PH^{\PP^{\PP}}$ for 
$n$ in binary.
\end{theorem}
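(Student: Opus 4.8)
The plan is to mimic Yap's strategy but replace the BBP series with the hypothesised convergent series, and to replace the hand-tailored irrationality estimate for $\pi$ with the abstract bounded-irrationality-measure hypothesis. Write $\alpha = \sum_{m\ge 1} a_m$ where, by assumption, the partial sum $S_N = \sum_{m=1}^N a_m$ is computable (as a rational, or more conveniently as a dyadic rational after truncation) in \FTCz\ given $N$ in unary, and where $|\alpha - S_N| \le \epsilon(N)$ for an explicit, fast-decreasing tail bound $\epsilon(N)$ coming from the convergence rate. First I would fix the target bit position $n$ and choose $N = N(n)$ large enough (polynomial in $n$, so still unary-writable from a unary $n$) that the truncation error $\epsilon(N)$ is smaller than, say, $2^{-n-c}$ for a small constant $c$. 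Compute the dyadic approximation $\tilde{S}$ to $S_N$ to enough bits — again $\mathrm{poly}(n)$ many — that the total error $|\alpha - \tilde S|$ is at most $2^{-n-2}$.

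The second step is the rounding/ambiguity argument, which is where the bounded irrationality measure enters. Having $\tilde S$ within $2^{-n-2}$ of $\alpha$ is not quite enough to read off the $n$-th bit, because $\alpha$ could lie extremely close to a dyadic rational of denominator $2^n$, in which case the $n$-th bit of $\tilde S$ need not equal that of $\alpha$. Bounded irrationality measure $\mu$ says precisely that $|\alpha - p/q| > C/q^{\mu}$ for all rationals $p/q$; applied with $q = 2^{n+O(1)}$ this shows $\alpha$ cannot be within $2^{-\mu n - O(1)}$ of such a dyadic, so by pushing the precision up from $n$ to $O(\mu n) = O(n)$ bits (still polynomial, since $\mu$ is a fixed constant for fixed $\alpha$) we guarantee the computed expansion agrees with the true binary expansion of $\alpha$ in position $n$. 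Concretely: compute $\tilde S$ accurate to $2^{-Kn}$ for a suitable constant $K$ depending on $\mu$, read bit $n$, and argue it is correct because any discrepancy would force $\alpha$ into a forbidden neighbourhood of a dyadic rational.

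For the complexity bookkeeping in the unary case: $N$ and the working precision are $\mathrm{poly}(n)$; each $a_m$ is in \FTCz; summing $\mathrm{poly}(n)$ many $\mathrm{poly}(n)$-bit dyadic rationals is iterated addition, which is in \FTCz; and the final rounding/bit-extraction is also in \FTCz. Composition of constantly many \FTCz\ steps stays in \FTCz, and since \FTCz\ $\subseteq$ \TCz\ this gives the \TCz\ bound. For $n$ in \emph{binary}, $N$ and the precision are now $\mathrm{exp}(|n|)$, so we cannot write them down explicitly; instead we observe that the $2^{-\Theta(n)}$-bit approximation of $\alpha$ is the ratio (or just the value, after clearing denominators) of an exponentially long but succinctly described arithmetic object — essentially an iterated sum whose terms are \FTCz-computable — i.e.\ it is given by a poly-size arithmetic circuit / SLP. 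Extracting a designated bit of (the value of) an SLP is, by Allender et al.~\cite{AllenderEtAl}, in the counting hierarchy, specifically $\PH^{\PP^{\PP}}$; feeding our succinctly-described approximation into that machinery, together with the same $\mu$-based correctness argument (now comparing against dyadics of exponential denominator, which only costs a constant factor more precision), places the succinct version in $\PH^{\PP^{\PP}}$.

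The main obstacle I anticipate is the second step done \emph{uniformly and at exponential scale} for the binary case: one must verify that the error bound $\epsilon(N)$ and the arithmetic producing $S_N$ are ``uniform enough'' that the length-$2^{|n|}$ truncation really is expressible as a polynomial-size SLP (this is immediate if the $m$-th term has a clean closed form, but needs care for a general \FTCz\ term), and that the Allender et al.\ bit-extraction can be invoked as a black box on this derived SLP while the Liouville/irrationality-measure comparison is carried out inside the counting hierarchy. The unary case is essentially routine once the precision is fixed; the binary case is where the real work lies, and it is exactly the place where we lean on the cited SLP-bit-extraction result rather than re-deriving it.
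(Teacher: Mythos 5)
Your proposal follows essentially the same route as the paper: truncate the series, compute the partial sum by iterated addition in \TCz\ for unary $n$ (respectively via the SLP bit-extraction machinery of Allender et al.\ for binary $n$), and use the bounded irrationality measure --- exactly the content of the paper's Lemma~\ref{lem:yap} --- to certify that the approximation determines the correct bit. If anything your write-up is more careful than the paper's, which silently assumes each term has a BBP-like closed form $t_k=\beta^{-kc}p(k)/q(k)$ to make the succinct (SLP) description immediate; the uniformity concern you flag for general \FTCz-computable terms in the binary case is a real one that the paper does not address.
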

In particular, the above inclusions hold for $\pi$.

Somewhat paradoxically we get slightly weaker bounds for algebraic numbers.
As our main result, we are able to show that (for an explanation of the
complexity classes used in the statement please see the next section):
\begin{theorem}\label{thm:main}
 Let $p$ be a fixed univariate polynomial of degree $d$, having integer 
coefficients. Then, $n^{th}$ bit of each real root of $p$ can be computed in 
\CeqNC$\cap$\TCLL, if $n$ is given in unary, and in $\PH^{\PP^{\PP}}$ if 
$n$ is given in binary.
\end{theorem}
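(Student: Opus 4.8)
The plan is to realize each real root $\alpha$ of $p$ as the limit of Newton--Raphson iterates started from a suitable rational seed, and to argue — via Liouville's theorem — that the rational approximant produced after enough iterations agrees with $\alpha$ in the bits we care about. First I would isolate the real roots: since $p$ is fixed, there is a fixed finite set of disjoint rational intervals $[a_i,b_i]$, one per real root, on each of which $p$ is monotone and $p'$ does not vanish (if $\alpha$ is a multiple root, pass to $p/\gcd(p,p')$, which is again a fixed integer polynomial up to clearing denominators). On such an interval the Newton iteration $x_{k+1} = x_k - p(x_k)/p'(x_k)$ converges quadratically to $\alpha$; starting from a rational $x_0$, each $x_k$ is a rational number whose numerator and denominator are given by a straight-line program of size $O(k)$ (this is exactly the ``ratio of two SLP's'' observation flagged in the introduction).

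The quantitative heart is the following. Liouville's theorem gives a fixed constant $c = c(p) > 0$ and exponent $d$ with $|\alpha - q/r| \ge c/r^d$ for every rational $q/r \ne \alpha$. Quadratic convergence gives $|x_k - \alpha| \le C^{2^k}$ for some fixed $C<1$ once $x_0$ is close enough, while the denominator of $x_k$ is at most $2^{2^{O(k)}}$. To read off the $n$-th bit we want $|x_k - \alpha| \ll 2^{-n}$ with room to spare; a constant number of doublings, i.e. $k = O(\log n)$ iterations, makes $|x_k - \alpha| \le 2^{-n^2}$, say. The Liouville bound then guarantees that $\alpha$ cannot have a ``carry-causing'' dyadic string of $0$'s or $1$'s of length $\Theta(n^2)$ around bit $n$ (such a string would make $\alpha$ too well approximated by the truncation of $x_k$), so the $n$-th bit of the dyadic expansion of the \emph{rational} $x_k$ equals the $n$-th bit of $\alpha$. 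This reduces the problem to: compute the $n$-th bit of a rational number presented as a ratio of two SLP's of size polynomial in the input length.

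For the complexity bookkeeping, note that when $n$ is in unary the SLP's have size $\mathrm{poly}(n)$ but their \emph{values} are doubly exponential only in $\log n$, hence have $\mathrm{poly}(n)$ bits; iterated integer multiplication, addition, and division with a $\mathrm{poly}(n)$-bit quotient are all computable by $\TCz$-uniform circuits, and division/bit-extraction for a ratio of polynomially-long integers lands in $\CeqNC \subseteq \Log$ and in $\TCLL$ by the known characterizations of integer division. This yields the unary bound. When $n$ is in binary, the iterates $x_k$ for $k = O(\log n)$ have $\mathrm{poly}(n)$-\emph{bit-length} SLP descriptions but values with up to $2^{\mathrm{poly}(\log n)}$ bits is not quite right — rather, we need $x_k$ with $|x_k-\alpha|\le 2^{-\mathrm{poly}(n)}$, so $k=O(\log n)$ still suffices and the resulting SLP for the numerator/denominator has size $\mathrm{poly}(n)$ while representing an integer of up to $2^{\mathrm{poly}(n)}$ bits; extracting a specified bit of a ratio of two such SLP's is placed in $\PH^{\PP^{\PP}}$ by extending the Allender et al. \cite{AllenderEtAl} result on bit-extraction for a single SLP to the ratio of two SLP's (compute, in the counting hierarchy, the bits of numerator and denominator, then the bits of the quotient via the same ``$\mathrm{PosSLP}$-in-$\PH^{\PP^{\PP}}$''-style argument).

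The main obstacle I expect is twofold. First, turning ``quadratic convergence holds eventually'' into an \emph{effective} statement with an explicitly bounded rational seed $x_0$ — one must quantify the basin of attraction of Newton's method in terms of $p$ alone (e.g. via the Newton--Kantorovich condition or a direct estimate using bounds on $p''/p'$ on the isolating interval), and this is where care is needed to keep everything dependent only on the fixed polynomial $p$ and not on $n$. Second, the matching step: one must confirm that the Liouville exponent $d$ and the number of Newton steps are compatible, i.e. that $O(\log n)$ iterations really do push the approximation error below the Liouville ``forbidden-approximation'' threshold at scale $2^{-n}$, and that the edge case where the truncation of $x_k$ would differ from $\alpha$'s expansion by a long carry is genuinely excluded. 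Everything else is routine circuit/counting-hierarchy plumbing, reusing the cited bit-extraction machinery for SLP's.
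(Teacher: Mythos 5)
Your proposal follows essentially the same route as the paper: isolate each real root in a fixed rational interval on which Newton--Raphson provably converges quadratically, run $O(\log n)$ iterations so that the iterate is a ratio of two SLP-computed integers within $2^{-\Theta(n)}$ (resp.\ $2^{-\mathrm{poly}}$) of the root, and invoke Liouville's theorem to exclude the long carry-propagating block of $0$'s or $1$'s around position $n$, so that the $n$-th bit of the rational iterate equals the $n$-th bit of $\alpha$. Your treatment of the binary case, extending the Allender et al.\ bit-extraction from one SLP to a ratio $\frac{\mathcal{N}}{\mathcal{D}}$ of two SLP's, is exactly the paper's Lemma~\ref{lem:binaryApproxRatio}, and your two flagged ``obstacles'' are resolved in the paper precisely as you anticipate (effective basin of attraction via a Markoff-type bound on $|p''/2p'|$ over the isolating interval, and the Liouville/carry matching as in Lemma~\ref{lem:yap}).

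The one genuine gap is in the unary complexity bookkeeping. You claim the bit of the ratio lands in \CeqNC$\cap$\TCLL\ ``by the known characterizations of integer division,'' but division is not the bottleneck: the bottleneck is \emph{evaluating the $O(\log n)$-fold composition} that defines the numerator and denominator of $x_k$ inside those classes. This is not an instance of iterated multiplication of a given list of integers (which is in \TCz); it is a sequential composition of depth $O(\log n)$, and naively stacking $O(\log n)$ many \TCz\ blocks only yields \TC$^1\subseteq$ \NCt. The paper needs two dedicated devices here: (i) the ratio polynomials of $p$ have constant degree, hence constant-depth arithmetic circuits, so their $O(\log n)$-fold bicomposition is a logarithmic-depth arithmetic circuit, i.e.\ a \GNC\ computation, whose bit extraction and sign tests give the \CeqNC\ bound (Lemma~\ref{lem:compositionGapNCo}); and (ii) for \TCLL, the composition is computed modulo every $O(\log n)$-bit prime $q$ as reachability in an $O(\log n)$-layered graph on $q^2$ vertices per layer --- an \AC-circuit of depth $O(\log\log n)$ --- followed by Chinese remaindering in \TCz\ (Lemmas~\ref{lem:reachLogDepth} and~\ref{lem:compositionTCLL}). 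Without one of these your argument establishes only a \TC$^1$-type bound for the unary case. A second, minor, point: the paper's good intervals also exclude roots of $p''$, not just of $p'$ (arranged by an extra gcd step), which is what makes $p'$ monotone on the interval and lets the convergence constant $M$ be computed from the endpoints; preprocessing by $p/\gcd(p,p')$ alone does not secure this.
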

Roughly twenty five years ago,
Ben-Or, Feig, Kozen and Tiwari \cite{BFKT} studied the problem of finding
the roots of a univariate polynomial of degree $n$ and $m$ bit coefficients,
under the promise that \emph{all roots are real}. Under this assumption they
are able to show that approximating the roots to an additive error of
$2^{-\mu}$ for an integer $\mu$ (which is presumably specified in unary)
is in $\mathsf{NC}$. Notice that while their result concerns non-constant
algebraic numbers it does not involve finding the bits of the algebraic numbers 
only approximating them. Also, since they only achieve unary tolerance
their claimed upper bound of $\mathsf{NC}$ is not better than the
\CeqNC $\subseteq \mathsf{NC}$. Also their method does not work if the
all-real-roots promise is not satisfied.

\subsection{Organization of the paper}
In Section~\ref{sec:prelims} we start with pointers to relevant complexity
classes and more importantly known results from elementary analysis that we
will need in our proofs. In Section~\ref{sec:compComplexity} we provide
upper bounds on the complexity of composing bivariate polynomials. This is
used in the subsequent Section~\ref{sec:quadConv} where we view Newton-Raphson
as an iterated composition of bivariate polynomials. In this section we
prove that the method converges ``quickly'' if its initial point is in a
carefully picked interval and that we can efficiently identify such intervals.
In Section~\ref{sec:proofs} we make use of the tools we have put together in the
previous sections to prove Theorem~\ref{thm:bbpIrrationals},~\ref{thm:main}.
We also prove a lower bound on rationals in this section.  Finally in 
Section~\ref{sec:concl} we conclude with some open questions.

\section{Preliminaries}\label{sec:prelims}
\subsection{Complexity Theoretic Preliminaries}
We start off by introducing straight line programs.
%\begin{definition}
An arithmetic circuit is a directed acyclic graph with input nodes labeled 
with the constants $0, 1$ or with indeterminates
$X_{1} , \ldots, X_{k}$ for some $k$. Internal nodes are labeled with 
one of the operations $+,-,*$.
A \emph{straight-line program} is a sequence of instructions corresponding to 
a sequential evaluation of an arithmetic circuit.
% We call \PosSLP: Given a division-free straight-line program producing an 
% integer $M$ decide whether $M >0 $ .
%\end{definition}
We will need to refer to standard complexity classes like $\NP, \PP, \PH$
and we refer the reader to any standard text in complexity such as 
\cite{AroraB09}. We will also use circuit complexity classes like 
$\TCz, \GNC$ and we refer the reader to \cite{Vollmer99} for details.

One non-standard class we use is $\TCLL $. This is inspired by the class
\FOLL\ introduced in \cite{BKLM} which is essentially 
the class of languages accepted by a uniform version of
an \FACz\ circuit iterated $O(\log{\log{n}})$ many times with an \ACz-circuit
on top. We obtain a \TCLL-circuit  by adding a \TCz-circuit on top of the
 iterated block of $\FACz$-circuits. The class of languages accepted by 
such circuits constitutes \TCLL.

\subsection{Mathematical Preliminaries}
%\subsubsection{Magnitude of the largest root}
In order to upper bound the largest  magnitude of roots of a polynomial we 
note the following (e.g. see Chapter~6 Section~2 of \cite{Yap00}):
\begin{fact}(Cauchy)\label{fact:rouche}
Let $p(x) = \sum_{i=0}^{d}{a_i x^i}$ be a polynomial. Then every root of 
$p(x)$ is smaller in absolute value than:
\[
M_p = 1 + \frac{1}{|a_d|}{\max}_{i=0}^{d-1}{|a_i|}
\]
 We can consider $[-M_p,M_p]$ as possible solution range which contains all the real roots.
\end{fact}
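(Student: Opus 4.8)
The plan is to prove Cauchy's bound by the classical ``dominant‑term'' argument applied to the equation satisfied by a root, together with a one–line geometric‑series estimate. I first dispose of degenerate cases: we may assume $a_d \neq 0$ (this is implicit in calling $d$ the degree of $p$), and if $a_0 = a_1 = \dots = a_{d-1} = 0$ then $p(x) = a_d x^d$, whose only root is $0$, while $M_p = 1$, so the claim is immediate. Hence assume $A := \max_{0 \le i \le d-1} |a_i| > 0$, so that $M_p = 1 + A/|a_d| > 1$.

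Now let $\zeta \in \mathbb{C}$ be an arbitrary root of $p$; I prove the bound for all complex roots, and the real case is then a special case, which is exactly what supplies the search interval $[-M_p, M_p]$ used later. If $|\zeta| \le 1$ we are done since $1 < M_p$, so assume $|\zeta| > 1$. Rearranging $p(\zeta) = 0$ gives $a_d \zeta^d = -\sum_{i=0}^{d-1} a_i \zeta^i$, and the triangle inequality yields $|a_d|\,|\zeta|^d \le \sum_{i=0}^{d-1} |a_i|\,|\zeta|^i \le A \sum_{i=0}^{d-1} |\zeta|^i = A \cdot \frac{|\zeta|^d - 1}{|\zeta| - 1}$. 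The only substantive step is to use $|\zeta| > 1$ to replace $|\zeta|^d - 1$ by the larger quantity $|\zeta|^d$, giving $|a_d|\,|\zeta|^d < \frac{A\,|\zeta|^d}{|\zeta| - 1}$; cancelling the positive factor $|a_d|\,|\zeta|^d$ leaves $|\zeta| - 1 < A/|a_d|$, i.e.\ $|\zeta| < 1 + A/|a_d| = M_p$, as required.

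I do not expect a genuine obstacle here: the argument is entirely elementary and short, and the only points needing a little care are (i) the case split $|\zeta| \le 1$ versus $|\zeta| > 1$, and (ii) checking $M_p \ge 1$ so that the first case is actually covered, which the degenerate‑case discussion settles. If one prefers to avoid the finite geometric sum, an equivalent route is to divide $p(\zeta) = 0$ by $a_d \zeta^d$ and bound $1 = \bigl|\sum_{i=0}^{d-1} (a_i/a_d)\,\zeta^{\,i-d}\bigr|$ above by the convergent tail $(A/|a_d|)\sum_{j \ge 1} |\zeta|^{-j} = \frac{A/|a_d|}{|\zeta| - 1}$ (convergent precisely because $|\zeta| > 1$), which again gives $|\zeta| - 1 < A/|a_d|$.
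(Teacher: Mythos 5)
Your proof is correct. The paper states this bound as a known Fact and cites Yap's book for it without reproducing a proof, and your argument is exactly the standard one that reference gives: split on $|\zeta|\le 1$ versus $|\zeta|>1$, isolate the leading term, apply the triangle inequality and the finite geometric sum, and cancel $|\zeta|^d$. The degenerate-case handling ($a_0=\dots=a_{d-1}=0$, so $M_p=1$ and the only root is $0$) and the observation that the real roots therefore lie in $[-M_p,M_p]$ are both handled properly, so nothing further is needed.
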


\begin{fact}\label{fact:TaylorSeries}
The Taylor (see \cite{Taylor}) series of a real (complex) function $f(x)$ that is infinitely differentiable in a neighborhood
of a real (complex) number $a$ is the power series
\[
 \sum_{n=0}^{\infty} \frac{f^{(n)}(a)}{n!}(x-a)^n
\]
where $f^{(n)}(a)$ denotes the $n^{th}$ derivative of $f$ evaluated at the point $a$. 
\end{fact}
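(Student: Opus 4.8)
The statement labeled Fact~\ref{fact:TaylorSeries} is a \emph{definition} rather than a proposition, so strictly speaking there is nothing to prove; the plan is to observe that it is well-posed. The hypothesis that $f$ is infinitely differentiable in a neighborhood of $a$ guarantees that $f^{(n)}(a)$ exists for every $n \geq 0$, hence each coefficient $f^{(n)}(a)/n!$ is a well-defined real (respectively complex) number and the formal power series $\sum_{n=0}^{\infty} \frac{f^{(n)}(a)}{n!}(x-a)^n$ is a legitimate object. No convergence claim is being asserted at this point — indeed the series need not converge to $f$, nor even converge at all, for $x \neq a$ — so no analytic argument is needed: the ``proof'' is the single remark that infinite differentiability makes every coefficient well-defined.

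If one nonetheless wishes to justify calling this particular series ``the'' Taylor series of $f$ at $a$, the natural route is Taylor's theorem with remainder. For each fixed $k$ one shows, either by $k+1$ successive applications of the mean value theorem or by integrating the $(k+1)$-st derivative $k+1$ times (the integral form of the remainder), that
\[
f(x) = \sum_{n=0}^{k} \frac{f^{(n)}(a)}{n!}(x-a)^n + R_k(x), \qquad R_k(x) = o\left((x-a)^k\right) \text{ as } x \to a .
\]
This exhibits the displayed power series as precisely the (unique) power series whose degree-$k$ truncation osculates $f$ to order $k$ at $a$ for every $k$; uniqueness of the coefficients follows by repeatedly differentiating a candidate power series term by term and evaluating at $a$, which recovers $f^{(n)}(a)/n!$.

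The only genuine issue here is expository, not mathematical: one should make clear that Fact~\ref{fact:TaylorSeries} is invoked later purely as a naming device — to introduce the coefficients $f^{(n)}(a)/n!$ that arise when expanding polynomials and the auxiliary functions appearing in the Newton--Raphson analysis — and that wherever the arguments actually require quantitative control (for example, in bounding the error of the Newton iteration), a separate explicit remainder estimate is supplied rather than any convergence property being read off from this fact.
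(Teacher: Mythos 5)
Your reading is correct: the paper treats Fact~\ref{fact:TaylorSeries} as a standard definition/known statement cited to the literature and supplies no proof of it, so your observation that there is nothing to prove beyond well-definedness (with Taylor's theorem available if one wants the remainder form, which the paper indeed uses separately in Lemma~\ref{lem:subintervalofI}) matches the paper's treatment. No gap.
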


%\subsubsection{Minimum distance between roots of two polynomials}
We will need to lower bound the minimum distance between the roots of a polynomial or the
so called \emph{root separation}. We use the following version of the Davenport-Mahler
theorem (see e.g. Corollary 29 (i) of Lecture VI from Yap~\cite{Yap00} for details of notation
and proof):
\begin{fact}\label{fact:rootSep}(Davenport-Mahler)
The separation between the roots of a univariate polynomial $p(x)$ of degree $d$ is at least:
\[
\sqrt{3|disc(p)|}||p||_2^{-d+1}d^{-(d+2)/2}
\]
\end{fact}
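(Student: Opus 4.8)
The plan is to reduce the claimed lower bound to a single multiplicative \emph{upper} bound on a product of root differences, and then to establish that upper bound via a Vandermonde determinant together with Hadamard's inequality and Landau's inequality relating the Mahler measure to the coefficient $2$-norm. We may assume $p$ is squarefree, since otherwise $disc(p)=0$ and the inequality is trivially true. Write $p(x)=a_d\prod_{i=1}^d(x-\alpha_i)$ over $\mathbb{C}$ and recall the classical identity $|disc(p)|=|a_d|^{2d-2}\prod_{i<j}|\alpha_i-\alpha_j|^2$. Let $s$ be the root separation, realised by the pair $\{\alpha_1,\alpha_2\}$, so that $|disc(p)|=|a_d|^{2d-2}\,s^2\,P$ where $P:=\prod_{\{i,j\}\neq\{1,2\}}|\alpha_i-\alpha_j|^2$. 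The theorem is then \emph{equivalent} to the estimate $P\le\frac13\,d^{\,d+2}\,||p||_2^{2d-2}\,|a_d|^{-(2d-2)}$, and the remaining work is to prove this.

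For the product estimate I would recognise $\prod_{i<j}|\alpha_i-\alpha_j|^2$ as $|\det V|^2$, where $V$ is the $d\times d$ Vandermonde matrix with $j$-th column $v_j=(1,\alpha_j,\alpha_j^2,\dots,\alpha_j^{d-1})^{\mathsf T}$. Performing the column operation $v_2\mapsto (v_2-v_1)/(\alpha_2-\alpha_1)$ multiplies the determinant by $1/(\alpha_2-\alpha_1)$, so the resulting matrix $V'$ satisfies $|\det V'|^2=|\det V|^2/s^2=P$; its modified column $w$ has entries $w_k=(\alpha_2^k-\alpha_1^k)/(\alpha_2-\alpha_1)=\sum_{\ell=0}^{k-1}\alpha_1^{\ell}\alpha_2^{k-1-\ell}$, hence $|w_k|\le k\cdot\max(1,|\alpha_1|,|\alpha_2|)^{d-2}$. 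Hadamard's inequality applied to $V'$ gives $P\le\|v_1\|^2\,\|w\|^2\,\prod_{j\ge 3}\|v_j\|^2$. Using $\|v_j\|^2\le d\,\max(1,|\alpha_j|)^{2(d-1)}$, the elementary bound $\sum_{k=1}^{d-1}k^2\le d^3/3$ (so $\|w\|^2\le\frac{d^3}{3}\max(1,|\alpha_1|,|\alpha_2|)^{2(d-2)}$), and the bookkeeping choice $|\alpha_1|\le|\alpha_2|$ — so that the divided-difference column carries the smaller root while the leftover factor $\max(1,|\alpha_2|)^{2(d-2)}$ is absorbed into what $\|v_2\|$ would have contributed — one collects exponents and obtains $P\le\frac13\,d^{\,d+2}\prod_{j=1}^d\max(1,|\alpha_j|)^{2(d-1)}=\frac13\,d^{\,d+2}\bigl(M(p)/|a_d|\bigr)^{2(d-1)}$, where $M(p)=|a_d|\prod_j\max(1,|\alpha_j|)$ is the Mahler measure.

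Finally I would invoke Landau's inequality $M(p)\le||p||_2$ to replace $M(p)$ by the coefficient $2$-norm, which yields the target estimate on $P$; substituting into $s^2=|disc(p)|\,|a_d|^{-(2d-2)}/P$ and taking square roots gives exactly $s\ge\sqrt{3|disc(p)|}\;||p||_2^{-d+1}\,d^{-(d+2)/2}$. (As a cruder alternative to Landau's step one could bound each $|\alpha_j|$ by the Cauchy bound $M_p$ of Fact~\ref{fact:rouche} and convert $M_p$ into a polynomial multiple of $||p||_2$, but this loses powers of $d$ and is unnecessary.) The main obstacle, and the only genuinely delicate point, is the Hadamard step: one must route the "saved" factor $1/s^2$ through the divided-difference column so that the loss incurred is only a polynomial factor $d^{O(1)}$ rather than an exponential blow-up, and must carefully split the max-of-roots factors between $w$ and the untouched columns in order to land precisely on the exponent $2(d-1)$ and the constant $\frac13$ — which is exactly where the $\sqrt3$ in the statement originates. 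The discriminant identity, the Vandermonde evaluation, and Landau's inequality are all standard.
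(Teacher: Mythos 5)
Fact~\ref{fact:rootSep} is stated in the paper as an imported black box (it is cited to Corollary~29(i) of Lecture~VI of Yap's book) and no proof is given there, so there is no internal argument to compare yours against; what you have written is a correct, self-contained reconstruction of the classical Mahler/Davenport-style proof, which is essentially the argument recorded in the cited reference. Your chain checks out: the reduction to squarefree $p$ via $disc(p)=0$, the identity $|disc(p)|=|a_d|^{2d-2}\prod_{i<j}|\alpha_i-\alpha_j|^2$, the divided-difference column operation turning $|\det V|^2/s^2$ into $P$, Hadamard's inequality, the bounds $\|v_j\|^2\le d\max(1,|\alpha_j|)^{2(d-1)}$ and $\sum_{k=1}^{d-1}k^2\le d^3/3$ (which is where the constant $3$, hence the $\sqrt{3}$, comes from), the exponent count $1+3+(d-2)=d+2$, and Landau's inequality $M(p)\le\|p\|_2$; substituting back gives exactly the stated bound. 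Two cosmetic remarks only: the estimate on $P$ is sufficient for, not literally equivalent to, the theorem; and your parenthetical that "the divided-difference column carries the smaller root" is backwards --- under $|\alpha_1|\le|\alpha_2|$ its entrywise bound $k\max(1,|\alpha_1|,|\alpha_2|)^{d-2}$ involves the \emph{larger} root, while the untouched column $v_1$ carries the smaller one --- but the absorption step you actually display, $\max(1,|\alpha_2|)^{2(d-2)}\le\max(1,|\alpha_2|)^{2(d-1)}$ standing in for what $\|v_2\|^2$ would have contributed, is the correct bookkeeping, so the slip is purely verbal. Note also that for the paper's purposes (fixed, constant-degree $p$) only some computable positive lower bound is needed, so your proof supplies more than the paper uses, which is harmless.
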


Here, $disc(p)$ is the \emph{discriminant} of $p$ and $||p||_2$ is the $2$-norm of the coefficients
of $p$. Further notice that a lower bound approximation to this bound can be computed in \FTCz\,
for constant polynomials $p$ (since it involves computing lower bound \emph{approximations} for
radicals and powers of constants and cosntant determinants).
\comment{
We will also need to upper bound the minimum distance between the roots of two
polynomials $p(x) = \sum_{i=0}^d{a_i x^i}$, $q(x) = \sum_{i=0}^{e}{b_i x^i}$. 
Let $S_{p,q}$ be the \emph{Sylvester matrix} of the two polynomials of dimension
$(d + e) \times (d+e)$ given by:
\[
S_{p,q}(i,j) = \left\{ \begin{array}{ll}
				a_{d + i - j} & \mbox{ if } 1 \leq i \leq j \leq d + i \leq d + e\\
				b_{i - j} & \mbox{ if } 1 \leq i - e \leq j \leq i \leq d + e \\
				0 & \mbox{ otherwise }
			\end{array}
		\right.
\]
The resultant $R(p,q)$ of $p,q$ is defined to be the determinant of the 
Sylvester matrix \cite{Sylvester}. The following is well known %(true only for monic polys???)

\begin{fact}
\[
R(p,q) = \prod_{(x,y) : p(x) = 0 \wedge q(y) = 0}{(x-y)}
\]
\end{fact}

Thus if $r_{p,q} = min_{(x,y): p(x) = 0, q(y) = 0}{|x - y|}$ we have
$r_{p,q}^{de} \leq R(p,q)$ implying that:
\begin{proposition}\label{prop:rootDist}
For fixed polynomials $p,q$:
\[
r_{p,q} \leq (R(p,q))^\frac{1}{de}
\]
Thus, we can compute an upper bound on $r(p,q)$ in \FTCz.
\end{proposition}
\begin{proof}
It suffices to find $2^{\lfloor\frac{\log{R(p,q)}}{de}\rfloor}$
which is easy since the exponent is a constant. 
\end{proof}
}

%\subsubsection{Upper bounding the derivative}
We will also need an upper bound on the magnitude of the derivative of a
univariate polynomial in open interval $(a,b)$. This is given
by the so called Markoff's Theorem \cite{Markoff} (see also \cite{Ore}):
\begin{fact}\label{fact:markoff}
Let $p(x)$ be a degree $d$ polynomial satisfying,
\[
\forall{x \in (a,b),}\, |p(x)| \leq M_{(a,b)}
\]
Then the derivative $p'(x)$  satisfies:
\[
\forall{x \in (a,b),}\, |p'(x)| \leq M'_{(a,b)} = \frac{d^2 M_{(a,b)}}{b-a}
\]
\end{fact}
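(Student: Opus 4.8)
The statement is the classical \emph{Markov brothers' inequality} for the first derivative (due to A.~A.~Markov), and the plan is to normalise the interval and then invoke the extremal role of the Chebyshev polynomials. First I would apply the affine change of variable $x = \varphi(t) := \frac{a+b}{2} + \frac{b-a}{2}\,t$, which maps $[-1,1]$ bijectively onto $[a,b]$; by continuity the hypothesis $|p(x)| \le M_{(a,b)}$ on the open interval extends to the closed interval $[a,b]$, so $q(t) := p(\varphi(t))$ is a polynomial of degree $d$ with $|q(t)| \le M_{(a,b)}$ for every $t \in [-1,1]$. Since $q'(t) = \varphi'(t)\,p'(\varphi(t))$ with $\varphi'$ a nonzero constant, an inequality of the form $\|q'\|_{[-1,1]} \le d^2\,\|q\|_{[-1,1]}$ rescales directly to the bound claimed in Fact~\ref{fact:markoff}. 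So the whole content is the normalised case: \emph{if $|q(t)| \le 1$ for all $t \in [-1,1]$ then $|q'(t)| \le d^2$ there.}

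For this I would split $[-1,1]$ into a ``bulk'' and two endpoint windows. On the bulk $\{\,t : 1 - t^2 \ge 1/d^2\,\}$ I use \emph{Bernstein's trigonometric inequality}: writing $t = \cos\theta$, the function $Q(\theta) := q(\cos\theta)$ is a real trigonometric polynomial of degree at most $d$ with $\|Q\|_\infty \le 1$, hence $\|Q'\|_\infty \le d$; unwinding $Q'(\theta) = -\sin\theta\,q'(\cos\theta)$ gives the pointwise estimate
\[
|q'(t)| \;\le\; \frac{d}{\sqrt{1-t^2}},
\]
which is at most $d^2$ precisely on the bulk. It remains to handle the two windows $\{\,t : 1 - t^2 < 1/d^2\,\}$ near $\pm 1$, each of width $\Theta(1/d^2)$, where the factor $(1-t^2)^{-1/2}$ degenerates; here I would exploit the rigidity of $T_d$. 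Suppose, toward a contradiction, that $|q'(t_0)| > d^2$ for some $t_0$ in (say) the window near $1$, and form $r := \lambda T_d - q$ with $\lambda$ slightly larger than $1$. At the $(d+1)$ points where $T_d$ equioscillates between $+1$ and $-1$, the value $r$ alternates in sign (since $|q| \le 1 < \lambda$ there), so $r$ has at least $d$ zeros in $(-1,1)$; a suitable choice of $\lambda$ together with an analysis of $r$ near $t_0$, where $q$ has an anomalously steep slope, forces one additional zero, so the degree-$d$ polynomial $r$ has more than $d$ zeros --- impossible. Splicing the bulk and the two windows yields $\|q'\|_{[-1,1]} \le d^2$, and transferring back through $\varphi$ completes the proof.

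The endpoint analysis is the step I expect to be the genuine obstacle: Bernstein's inequality is soft and quantitatively loose exactly where Fact~\ref{fact:markoff} is sharp (the constant $d^2$ is attained only at $t = \pm 1$, by $q = T_d$), so the extremal structure of the Chebyshev polynomials has to enter in some form. An alternative to the zero-counting above is a complex-analytic route: the Bernstein--Walsh lemma gives $|q(z)| \le \|q\|_{[-1,1]}\,\bigl|z + \sqrt{z^2-1}\bigr|^{d}$ for all $z \in \mathbb{C}$, and applying the Cauchy estimate for $q'$ on a circle of radius $\Theta(1/d^2)$ centred at an endpoint converts this growth bound into a bound of the correct order for $|q'|$ near $\pm 1$; recovering the sharp constant $d^2$ (rather than $O(d^2)$) then still requires the Chebyshev-interpolation or zero-counting argument. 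Since Fact~\ref{fact:markoff} is invoked in later sections only as a crude magnitude estimate, one may also simply cite \cite{Markoff,Ore}; the above is the self-contained derivation I would give if one were wanted.
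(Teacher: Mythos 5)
The paper offers no proof of this statement to compare against: Fact~\ref{fact:markoff} is quoted as the classical Markov (Markoff) inequality, with citations to \cite{Markoff} and \cite{Ore}, and is used later only as a black box. Your plan --- normalize to $[-1,1]$ by an affine map, use Bernstein's inequality $|q'(t)|\le d/\sqrt{1-t^2}\,\|q\|$ on the bulk $\{1-t^2\ge 1/d^2\}$, and treat the two endpoint windows by a Chebyshev extremal argument --- is the standard textbook route and can be completed. However, as written the endpoint step is a plan rather than a proof: ``a suitable choice of $\lambda$ together with an analysis of $r$ near $t_0$ \dots forces one additional zero'' hides exactly the sign/ordering analysis (via $r'$ and Rolle, with control on $T_d'$ inside the window) where Markov's inequality is genuinely hard, as you yourself acknowledge. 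The cleanest way to close that gap is to replace the ad hoc zero count by Schur's inequality (if $\deg s\le d-1$ and $|s(t)|\sqrt{1-t^2}\le 1$ on $[-1,1]$, then $\|s\|_\infty\le d$); combined with Bernstein applied to $s=q'/d$ this yields $\|q'\|_\infty\le d^2$ on all of $[-1,1]$ in one stroke, with no case split.

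One quantitative point needs fixing. Your claim that $\|q'\|_{[-1,1]}\le d^2\|q\|_{[-1,1]}$ ``rescales directly'' to the displayed bound is off by a factor of $2$: since $q'(t)=\tfrac{b-a}{2}\,p'(\varphi(t))$, the normalized inequality gives $|p'(x)|\le \frac{2d^2M_{(a,b)}}{b-a}$ on $(a,b)$. In fact the constant printed in Fact~\ref{fact:markoff} is itself too small by this factor, so no proof can establish it as stated: for $p(x)=2x^2-1$ on $(-1,1)$ one has $|p|\le 1$, $d^2M/(b-a)=2$, yet $|p'(0.9)|=3.6$; the sharp form is $M'_{(a,b)}=\frac{2d^2M_{(a,b)}}{b-a}$. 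This discrepancy is immaterial downstream, since the fact enters only in Lemma~\ref{lem:subintervalofI} as an order-of-magnitude bound on $|p''|$ used to define the parameter $M$, where any constant factor is harmless --- which also means your closing observation is the right practical one: for the paper's purposes citing \cite{Markoff,Ore}, or even proving only an $O(d^2)$ bound by your Bernstein--Walsh/Cauchy-estimate route, would suffice.
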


\subsubsection{Removing Rational and Repeated Roots}
The rational roots of a polynomial can be dealt with
the aid of the following, easy to see, fact:
\begin{fact}
The rational roots of a monic polynomial (i.e. highest degree coefficient is $1$)
with integer coefficients are integers.
\end{fact}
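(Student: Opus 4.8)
The plan is to derive this as a corollary of the rational root test, whose proof is entirely elementary. Write the monic polynomial as $p(x) = x^d + a_{d-1}x^{d-1} + \cdots + a_1 x + a_0$ with each $a_i \in \mathbb{Z}$, and suppose $r$ is a rational root of $p$. I would express $r = u/v$ in lowest terms, so $u, v \in \mathbb{Z}$, $v \geq 1$, and $\gcd(u,v) = 1$, and aim to show $v = 1$.

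First I would substitute $x = u/v$ into the equation $p(x) = 0$ and clear denominators by multiplying through by $v^d$, obtaining the integer identity $u^d + a_{d-1}u^{d-1}v + a_{d-2}u^{d-2}v^2 + \cdots + a_0 v^d = 0$. Moving $u^d$ to one side exhibits it as $v$ times an integer, so $v \mid u^d$. Then I would invoke coprimality: since $\gcd(u,v) = 1$, no prime divides both $u$ and $v$, hence none divides both $u^d$ and $v$, so $\gcd(u^d, v) = 1$. Combined with $v \mid u^d$ and $v \geq 1$ this forces $v = 1$, i.e. $r = u \in \mathbb{Z}$, as claimed.

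There is no real obstacle here — this is standard elementary number theory and the whole argument is a few lines. The only step needing (routine) care is the implication $\gcd(u,v) = 1 \Rightarrow \gcd(u^d, v) = 1$, which follows immediately from unique factorization in $\mathbb{Z}$, or equivalently from iterating Euclid's lemma $d$ times.
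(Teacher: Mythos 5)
Your argument is correct and is the standard rational-root-test proof that the paper implicitly has in mind when it labels this an ``easy to see'' fact without supplying a proof. Nothing is missing: clearing denominators to get $v \mid u^d$ and then using $\gcd(u,v)=1$ to force $v=1$ is exactly the right (and essentially the only) elementary route.
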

To make $p(x) = \sum_{i=0}^d{a_ix^i}$ monic, just substitute $y = a_dx$ in
$a_d^{d-1}p(x)$ to obtain a monic polynomial $q(y)$. Iterating over all integers in the
range given by Fact~\ref{fact:rouche} we can find and eliminate the integer roots
of $q(y)$ and therefore the corresponding rational roots of $p(x)$.

In general, a polynomial will have repeated roots. If this is the case 
the repeated root will also be a root of the derivative. Thus it suffices
to find the gcd $g$ of the given polynomial $p$ and its derivative $p'$,
since we can recursively find the roots of $p/g$ and $g$. As we will be
focusing only on fixed polynomials the gcd computation and the division
will be in \FTCz.

We will actually need a more stringent condition on the polynomials - i.e. 
$p$ does not share a root with its derivative $p'$ and even with its
double derivative $p''$. Notice that the above procedure does not guarantee
this. For example if $p(x) =x(x^2-1)$, then $p'(x)=3x^2-1$ and $p''(x)=6x$.
Thus, $p$ and $p''$ share a root but $p$ and $p'$ don't.

So we will follow an iterative procedure in which we find the gcd of $p,p'$
to get two polynomials $p_1 = p/(p,p'), p_2 = (p,p')$ (denoting gcd of $p,q$
by $(p,q)$). For $p_1$ we find
the gcd $(p_1, p''_1)$ and set $p_3 = p_1/(p_1,p''_1), p_4 = (p_1,p''_1)$.
Now we recurse for the $3$ polynomials $p_2, p_3, p_4$ (whose product is $p$).
Notice that the recursion will bottom out when some gcd becomes a constant.
As a result of the above discussion we can assume hereafter that $p$ does
not share a root with either $p'$ or $p''$.
\subsubsection{Good Intervals}
\begin{definition}\label{def:goodInterval}
Fix an interval $I = [a,b]$ of length $|I| = b - a $.
We will call the interval $I$ good for an integral polynomial $p$ if
it contains exactly one root (say $\alpha$) of $p$ and no root of $p'$ and $p''$.
\end{definition}

\begin{lemma}\label{lem:findGoodInterval}
If  $p$ is a polynomial of degree $d$ such that $p$ doesn't share a 
root with its derivative and double derivative then there exist 
$\delta$ such that all intervals of length less than $\delta$ contain at
most one root of $p$ and if they contain a root of $p$ then they
do not contain any roots of $p'$ or $p''$.
 As a consequence we can find good intervals $I$ in \FTCz.
\end{lemma}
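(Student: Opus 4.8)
The plan is to produce a single positive constant $\delta$ that is smaller than every pairwise distance among the (complex) roots of $p$, $p'$ and $p''$, then to check that any such $\delta$ forces the stated interval property, and finally to argue that a rational lower bound for $\delta$ --- and hence an explicit family of good intervals --- can be computed in \FTCz. We may assume $d\ge 2$: for $d\le 1$ the root of $p$ is rational and has already been removed by the reductions above, and for $d\ge 2$ the polynomials $p'$ and $p''$ are not identically zero.

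\emph{Reducing to one squarefree polynomial.} Since $p$ shares no root with $p'$, $p$ is squarefree, so its real roots are simple; and by hypothesis no root of $p$ is a root of $p'$ or of $p''$. Let $q$ be the squarefree part of $p\cdot p'\cdot p''$, taken with integer coefficients. Its roots are exactly the distinct roots of $p$, $p'$ and $p''$, each of multiplicity one, so $\mathrm{disc}(q)\neq 0$ and $\deg q\ge d\ge 2$. Applying the Davenport--Mahler bound (Fact~\ref{fact:rootSep}) to $q$ produces a positive number $\sigma$ that lower bounds the distance between any two distinct roots of $q$; set $\delta:=\sigma$ (any positive value at most $\sigma$ would work).

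\emph{The interval property.} Let $I$ be an interval with $|I|<\delta$. If $I$ contained two distinct roots of $p$, or a root of $p$ and a root of $p'$, or a root of $p$ and a root of $p''$, then in each case $I$ would contain two \emph{distinct} roots of $q$ at distance strictly less than $\delta\le\sigma$ --- distinctness being guaranteed by the squarefreeness of $p$ together with the non-sharing hypothesis --- contradicting the choice of $\sigma$. Hence $I$ contains at most one root of $p$, and whenever it contains a root of $p$ it contains no root of $p'$ or $p''$; that is, $I$ is good precisely when it actually contains a root of $p$.

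\emph{Computing good intervals in \FTCz.} As $p$, and hence $q$, is a fixed polynomial, the remark following Fact~\ref{fact:rootSep} gives a positive rational $\delta$ lower bounding the Davenport--Mahler bound of $q$ and computable in \FTCz. By Fact~\ref{fact:rouche} all real roots of $p$ lie in $[-M_p,M_p]$ for a constant $M_p$. I would partition $[-M_p,M_p]$ into $N=\lceil 2M_p/\delta\rceil$ (a constant) consecutive closed subintervals of length $<\delta$, choosing the rational breakpoints so that none is a root of $p\,p'\,p''$ --- possible since that polynomial has finitely many roots and, for a fixed polynomial, the necessary perturbation is an $O(1)$, hence \FTCz, computation. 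For each subinterval $[u,v]$ I evaluate the fixed polynomial $p$ at its rational endpoints: because $p$ is squarefree every real root of $p$ is simple and at most one lies in $[u,v]$, so $[u,v]$ is good if and only if $p(u)p(v)<0$. Evaluating a fixed-degree polynomial at a rational point and testing a sign is in \FTCz, so the list of good intervals (one around each real root of $p$) is produced in \FTCz. The only delicate points are bookkeeping rather than depth: one must feed Davenport--Mahler a \emph{squarefree} polynomial so that $\mathrm{disc}\ne 0$ (which is exactly why we pass to the squarefree part of $p\,p'\,p''$), and one must ensure that a sign change of $p$ across the chosen endpoints faithfully detects ``exactly one simple root inside'' (which needs both the squarefreeness of $p$ and the choice of non-root breakpoints); the remaining algebraic operations --- gcd for the squarefree part, discriminant, $2$-norm, lower-bound approximations of radicals and powers of constants --- are all \FTCz\ for fixed polynomials as already noted.
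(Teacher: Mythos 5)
Your proof is correct and follows essentially the same route as the paper's: apply Davenport--Mahler to a squarefree polynomial whose roots comprise those of $p$, $p'$, $p''$ to get $\delta$, partition the Cauchy root bound interval into length-$\delta$ pieces, and detect good intervals by sign changes of $p$ at the endpoints. Your handling of squarefreeness (passing to the squarefree part of $p\,p'\,p''$) and of the breakpoints not being roots is in fact somewhat more careful than the paper's parenthetical remark, but it is the same argument.
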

\begin{proof}
 Using Fact~\ref{fact:rootSep} compute $\delta$ i.e. the
  minimum distance between roots of $p$ and that of $p'p''$.
(let $p_1,p_2$ be $p',p''$ with the roots common with $p'',p'''$ respectively,
removed; then a lower bound on the root separation of $pp_1p_2$, does the job).
Partition the interval containing
all the roots (which can be inferred from Fact~\ref{fact:rouche}) into
sub-intervals of length $\delta$.  If such an interval contains
a root of $p$ then it contains precisely one root of $p$ and no roots of 
$p'$ or $p''$.  By checking that signs of $p$ are opposite at the end points
 we can identify good intervals.
\end{proof}

\subsection{From approximation to exact computation}
We will crucially use the following theorem (see e.g. Shidlovskii\cite{Shid} or Yap \cite{Yap00})
\begin{fact}\label{fact:liouville} (Liouville's Theorem)
If $x$ is a real algebraic number of degree $d \geq 1$, then there
exists a constant $c = c(x) > 0$ such that the following inequality
holds for any $\alpha \in \mathbb{Z}$ and $\beta \in \mathbb{N}$, $\alpha/\beta \neq x$:
\[
\left| x - \frac{\alpha}{\beta}\right| > \frac{c}{\beta^d}
\]
\end{fact}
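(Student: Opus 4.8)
The plan is to run the classical, fully elementary argument via the minimal polynomial together with the mean value theorem. Let $f \in \mathbb{Z}[x]$ be the minimal polynomial of $x$, which by hypothesis has degree exactly $d$; write $f(t) = a_d t^d + \cdots + a_1 t + a_0$. First I would dispose of the easy range: if $|x - \alpha/\beta| > 1$, then the claimed inequality holds for any $c \le 1$ since $\beta^d \ge 1$. So from this point on I may assume $\alpha/\beta$ lies in the compact interval $J = [x-1, x+1]$.

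Next I would produce a lower bound on $|f(\alpha/\beta)|$. When $d \ge 2$, $f$ is irreducible over $\mathbb{Q}$ of degree at least two, hence has no rational root, so $f(\alpha/\beta) \neq 0$; moreover $\beta^d f(\alpha/\beta) = \sum_{i=0}^{d} a_i \alpha^i \beta^{d-i}$ is a nonzero integer, whence $|f(\alpha/\beta)| \ge \beta^{-d}$. The degenerate case $d = 1$, where $x$ is itself rational, say $x = p/q$ in lowest terms, is handled directly: $|x - \alpha/\beta| = |p\beta - q\alpha|/(q\beta) \ge 1/(q\beta)$, so $c = 1/(q+1)$ works.

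Then I would transfer this to a lower bound on $|x - \alpha/\beta|$ using that $x$ is a root of $f$. By the mean value theorem there is a point $\xi$ strictly between $x$ and $\alpha/\beta$, in particular $\xi \in J$, with $f(\alpha/\beta) = f(\alpha/\beta) - f(x) = f'(\xi)\,(\alpha/\beta - x)$. Setting $M = 1 + \max_{t \in J}|f'(t)|$, a finite positive constant depending only on $x$, we obtain $\beta^{-d} \le |f(\alpha/\beta)| \le M\,|x - \alpha/\beta|$, i.e. $|x - \alpha/\beta| \ge \beta^{-d}/M$. Choosing $c = c(x) = \min\{1/2,\, 1/(2M)\}$ makes the inequality strict both in the easy range and inside $J$, which proves the Fact.

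The argument has no genuine obstacle, since it is deliberately elementary; the one point deserving care is the \emph{uniformity} of $c$. The constant $M$ must be extracted from a fixed compact interval determined by $x$ alone (here $[x-1,x+1]$) rather than from the varying interval between $x$ and $\alpha/\beta$, so that $c$ does not depend on the particular rational; this is precisely why the reduction to the case $|x-\alpha/\beta| \le 1$ is carried out first.
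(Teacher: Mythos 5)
Your proof is correct and is the standard elementary argument: the integrality of $\beta^d f(\alpha/\beta)$ for the degree-$d$ integer minimal polynomial $f$, combined with the mean value theorem on the fixed compact interval $[x-1,x+1]$, with the cases $|x-\alpha/\beta|>1$ and $d=1$ handled separately. The paper does not prove this Fact at all but imports it from the cited sources (Shidlovskii, Yap), where essentially this same argument appears; your care about extracting the constant $M$ from an interval depending only on $x$ is exactly the point that makes $c=c(x)$ uniform.
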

The rest of this subsection is an adaptation of the corresponding material
in Chee Yap's paper on computing $\pi$ in \Log. The primary difference being
that we choose to pick the elementary Liouville's Theorem for algebraic
numbers instead of the advanced arguments required for bounding the
irrationality measure of $\pi$. We could throughout replace the use of
Liouville's theorem by the much stronger and deeper Roth's theorem but
prefer not to do so in order to retain the elementary nature of the arguments.

\begin{definition}
Let $x$ be a real number. Let $\{x\} = x - \lfloor x \rfloor$ be the fractional part of $x$.
Further, let $\{x\}_n = \{2^n x\}$ and $x_n$ be the $n$-th bit after the binary point.
\comment{
We denote by:
\begin{itemize}
\item $\{x\} = x - \lfloor x \rfloor$ the fractional part of $x$.
%\item $\lfloor x \rfloor_n = \frac{\lfloor 2^n x \rfloor}{2^n}$ for $n$ non-negative integer
\item $\{x\}_n = \{2^n x\}$ for $n$ non-negative integer.
\item $x_n$ the $n$-th bit after the binary point.
\end{itemize}
}
\end{definition}

It is clear that $x_n = 1$ iff $\{x\}_{n-1} \geq \frac{1}{2}$. For
algebraic numbers we can sharpen this: 
\begin{lemma}\label{lem:yap}(Adapted from Yap\cite{Yap})
Let $x$ be an irrational algebraic number of degree $d$ and let $c = c(x)$
be the constant guaranteed by Liouville's theorem. Let
 $\epsilon_n = c2^{-(d-1)n - 2}$ then for $n$ such that
$\epsilon_n < \frac{1}{4}$ we have:
\begin{itemize}
\item $x_n = 1$ iff $\{x\}_{n-1} \in (\frac{1}{2} + 2\epsilon_n, 1 - 2\epsilon_n)$.
\item $x_n = 0$ iff $\{x\}_{n-1} \in (2\epsilon_n, \frac{1}{2} - 2\epsilon_n)$.
\end{itemize}
\end{lemma}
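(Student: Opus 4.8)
The plan is to relate the $n$-th bit $x_n$ to the fractional part $\{x\}_{n-1} = \{2^{n-1}x\}$ by showing that, for an irrational algebraic number, this fractional part cannot lie too close to the ``ambiguous'' values $0$, $\frac12$, or $1$. Recall the trivial observation already noted: $x_n = 1$ iff $\{x\}_{n-1} \geq \frac12$. The two bulleted statements sharpen this by asserting that $\{x\}_{n-1}$ is in fact bounded away from $\frac12$ (and from $0$ and $1$) by $2\epsilon_n$. So the whole lemma reduces to the single quantitative claim: $\{x\}_{n-1} \notin (\frac12 - 2\epsilon_n, \frac12 + 2\epsilon_n)$ and $\{x\}_{n-1} \notin (0, 2\epsilon_n) \cup (1 - 2\epsilon_n, 1)$, given $\epsilon_n < \frac14$ (the hypothesis $\epsilon_n < \frac14$ is what makes the four listed intervals disjoint and nonempty, and also forces $x_{n+1}$-type carry issues to be irrelevant).

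First I would translate each ``forbidden'' event into a statement that $x$ is very close to a specific rational with denominator $2^{n-1}$ or $2^n$. If $\{2^{n-1}x\}$ were within $2\epsilon_n$ of $\frac12$, then $2^n x$ would be within $4\epsilon_n$ of an odd integer $m$, i.e.\ $|x - m/2^n| < 4\epsilon_n/2^n = 4\epsilon_n 2^{-n}$; similarly, being within $2\epsilon_n$ of $0$ or $1$ makes $2^{n-1}x$ within $2\epsilon_n$ of an integer $k$, i.e.\ $|x - k/2^{n-1}| < 2\epsilon_n 2^{-(n-1)} = 4\epsilon_n 2^{-n}$. In every case we obtain a rational approximation $\alpha/\beta$ with $\beta \le 2^n$ and $|x - \alpha/\beta| < 4\epsilon_n 2^{-n}$. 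Since $x$ is irrational, $\alpha/\beta \neq x$, so Liouville's theorem (Fact~\ref{fact:liouville}) applies and gives $|x - \alpha/\beta| > c/\beta^d \ge c/2^{nd} = c\,2^{-nd}$.

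Next I would combine the two bounds: $c\,2^{-nd} < 4\epsilon_n 2^{-n}$, i.e.\ $\epsilon_n > \frac{c}{4} 2^{-nd+n} = \frac{c}{4} 2^{-(d-1)n}$. But by definition $\epsilon_n = c\,2^{-(d-1)n-2} = \frac{c}{4} 2^{-(d-1)n}$, so this would force $\epsilon_n > \epsilon_n$, a contradiction. Hence none of the forbidden events can occur, which is exactly what the two bulleted equivalences assert once combined with $x_n = 1 \iff \{x\}_{n-1}\ge \frac12$ and the disjointness of the four intervals guaranteed by $\epsilon_n < \frac14$. I would also remark at the start that we may assume $0 \le x < 1$ (replacing $x$ by its fractional part changes neither its algebraicity, its degree, nor the bits after the binary point), so that $\{x\}_{n-1}$ genuinely ranges over $[0,1)$ and the case analysis above is exhaustive.

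The only mildly delicate point — the ``main obstacle'' such as it is — is bookkeeping the constant: making sure the factor lost in passing from ``$\{2^{n-1}x\}$ within $2\epsilon_n$ of $\frac12$'' to ``$x$ within $4\epsilon_n 2^{-n}$ of $m/2^n$'' is exactly the factor by which $\epsilon_n$ was deflated relative to $c\,2^{-(d-1)n}$, namely $4 = 2^2$, so that the inequality closes with no slack to spare. Everything else is a routine unwinding of definitions plus one application of Liouville. (Strictly, one should handle the endpoints of the intervals and the edge case where $\lfloor x\rfloor$ shifts, but with $\epsilon_n<\frac14$ these are non-issues.)
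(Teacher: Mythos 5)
Your proposal is correct and follows essentially the same route as the paper: both apply Liouville's theorem with denominator a power of two (the paper takes $\beta = 2^n$ and multiplies the inequality by $2^{n-1}$, so that $\alpha/2$ runs over all half-integers) to conclude that $\{2^{n-1}x\}$ lies at distance more than $2\epsilon_n$ from $0$, $\tfrac12$, and $1$, and then combines this with the trivial equivalence $x_n = 1 \iff \{x\}_{n-1} \geq \tfrac12$. Your case-by-case contradiction is just a more explicit unwinding of the paper's one-line computation, with the same constant bookkeeping.
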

\begin{proof}
Taking $\beta = 2^{n}$ in Liouville's theorem we get:
$\left| x - 2^{-n}\alpha \right| > \frac{c(x)}{2^{dn}}$
i.e.,
$\left| 2^{n-1} x - \frac{\alpha}{2} \right|  > \frac{c(x)}{2^{d(n-1)+1}} = 2\epsilon_n.$
\end{proof}

Consequently, we can find successive approximations $\{S_m\}_{m \in \mathbb{N}}$ such that the error terms $R_m = x - S_m$ are small enough 
(as described below). $x_n$ is just the $n$-th bit of $S_m$.

\section{Complexity of Composition}\label{sec:compComplexity}
We investigate the complexity of composing polynomials. This will be useful
when we use the Newton-Raphson method to approximate roots of polynomials
since Newton-Raphson can be viewed roughly as an algorithm that 
iteratively composes polynomials.
\begin{definition}
A univariate polynomial with integer coefficients is, an \emph{integral} 
polynomial. Any integral polynomial when evaluated on a rational value 
$\frac{\alpha}{\beta}$, where $\alpha,\beta$ are integers, can be expressed 
as the ratio of two bivariate polynomials in $\alpha,\beta$ called the 
\emph{ratio} polynomials of the integral polynomial.
\end{definition}
\begin{definition} Let $p$ be an integral polynomial. For a positive integer $t$,
the $t$-\emph{composition} of the polynomial denoted by $p^{[t]}$ is 
defined inductively as: $p^{[1]}(x) = p(x)$ and $p^{[t+1]}(x) = p^{[t]}(p(x))$.
\end{definition}
\begin{definition}
Let $f,g$ be a pair of bivariate polynomials. For a positive integer $t$
define the $t$-bicomposition of $(f,g)$ to be the pair of
bivariate polynomials $(F^{[t]},G^{[t]})$ as follows:
\[
F^{[1]}(\alpha,\beta) = f(\alpha,\beta), G^{[1]}(\alpha,\beta) = g(\alpha,\beta),
\]
and,
\[
F^{[t+1]}(\alpha,\beta) = f(F^{[t]}(\alpha,\beta), G^{[t]}(\alpha,\beta)),
\]
\[
G^{[t+1]}(\alpha,\beta) = g(F^{[t]}(\alpha,\beta), G^{[t]}(\alpha,\beta)),
\]
\end{definition}

The following is a direct consequence of the definitions:
\begin{proposition}\label{prop:ratioOfCompositionIsBicomposition}
The ratio polynomials of the $t$-composition of an integral polynomial $p$ are 
exactly the $t$-bicompositions of the ratio polynomials of $p$.
\end{proposition}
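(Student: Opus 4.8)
The plan is a routine induction on $t$, the only subtlety being that one must read the defining relation of the ratio polynomials as an identity of \emph{rational functions} in the indeterminates $\alpha,\beta$, so that substitution into it is legitimate. Concretely I will fix the canonical ratio polynomials obtained by clearing denominators: if $p(x)=\sum_{i=0}^{d}a_{i}x^{i}$ with $d=\deg p\geq 1$, then $f(\alpha,\beta)=\sum_{i=0}^{d}a_{i}\alpha^{i}\beta^{d-i}$ and $g(\alpha,\beta)=\beta^{d}$, and the equality $p(\alpha/\beta)=f(\alpha,\beta)/g(\alpha,\beta)$ holds in $\mathbb{Q}(\alpha,\beta)$. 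I will also record the triviality, proved by a one-line induction from the given recurrence, that $p^{[t]}$ is the $t$-fold self-composition of $p$; in particular $p^{[t+1]}=p\circ p^{[t]}$.

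For the induction itself, the base case $t=1$ is simply the definition of the ratio polynomials and of the $1$-bicomposition. For the step, I substitute the induction hypothesis $p^{[t]}(\alpha/\beta)=F^{[t]}(\alpha,\beta)/G^{[t]}(\alpha,\beta)$ into $p^{[t+1]}=p\circ p^{[t]}$ and then apply the rational-function identity for $p$ with $\alpha$ replaced by $F^{[t]}(\alpha,\beta)$ and $\beta$ by $G^{[t]}(\alpha,\beta)$:
\[
p^{[t+1]}(\alpha/\beta)=p\!\left(\frac{F^{[t]}(\alpha,\beta)}{G^{[t]}(\alpha,\beta)}\right)=\frac{f\bigl(F^{[t]}(\alpha,\beta),G^{[t]}(\alpha,\beta)\bigr)}{g\bigl(F^{[t]}(\alpha,\beta),G^{[t]}(\alpha,\beta)\bigr)}=\frac{F^{[t+1]}(\alpha,\beta)}{G^{[t+1]}(\alpha,\beta)},
\]
where the last equality is precisely the bicomposition recurrence. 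This already shows that the bicomposition pair represents the correct rational function.

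The only place where a little care is needed --- and hence the closest thing to an ``obstacle'' --- is upgrading ``represents the same rational function'' to ``equals the canonical ratio-polynomial pair of $p^{[t]}$''. I handle this by a side induction: since $g(a,b)=b^{d}$, the recurrence gives $G^{[1]}=\beta^{d}$ and $G^{[t+1]}=g(F^{[t]},G^{[t]})=(G^{[t]})^{d}$, so $G^{[t]}=\beta^{d^{t}}$. As $\deg p^{[t]}=d^{t}$, the canonical ratio polynomials of $p^{[t]}$ have denominator $\beta^{d^{t}}$ and numerator $\beta^{d^{t}}p^{[t]}(\alpha/\beta)$; comparing with the display (whose denominator is indeed $\beta^{d^{t+1}}$) forces $F^{[t+1]}(\alpha,\beta)=\beta^{d^{t+1}}p^{[t+1]}(\alpha/\beta)$, i.e. $(F^{[t+1]},G^{[t+1]})$ is exactly the canonical ratio-polynomial pair of $p^{[t+1]}$. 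In summary, the proposition is the assertion that the evaluation map ``plug in $\alpha/\beta$'' intertwines composition of integral polynomials with bicomposition of their ratio polynomials, and the proof is pure bookkeeping with no genuine difficulty.
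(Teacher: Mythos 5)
Your proof is correct and fills in exactly the routine induction that the paper leaves implicit (the paper simply asserts the proposition is ``a direct consequence of the definitions''). You rightly flag the one point worth checking --- that the composition recurrence is written as $p^{[t+1]}=p^{[t]}\circ p$ while the bicomposition recurrence corresponds to $p\circ p^{[t]}$, which agree because both are the $(t+1)$-fold self-composition --- and your side induction pinning down the canonical denominator $\beta^{d^{t}}$ is a sound way to make ``the ratio polynomials'' unambiguous.
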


\begin{definition}\label{def:Ccomputable}
For an arithmetic complexity class $\mathcal{C}$ containing \FTCz \,\ we call an 
integral polynomial $\mathcal{C}$-computable if its ratio polynomials (viewed as
functions that take in the bit representations of its two (constant)  arguments 
as inputs and output an integer value) are in $\mathcal{C}$.
\end{definition}

Here we consider upper bounds on the complexity of computing compositions of 
fixed integral polynomials.
We first prove that:
\begin{lemma}\label{lem:compositionGapNCo}
Let $p$ be a fixed integral polynomial, then given $n$ in unary the
$l$-bicomposition (for $l = O(\lceil\log{n}\rceil)$) of $p$ is computable in \GNC.
\end{lemma}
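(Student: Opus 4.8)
The plan is to track the bit-size of the coefficients of the bicompositions $(F^{[l]},G^{[l]})$ and then show that each bicomposition step is implementable in \GNC, so that $O(\log n)$ nested compositions remain in \GNC. First I would observe that if $f,g$ have degree $D$ (in the two variables jointly) and coefficients of absolute value at most $2^s$, then one composition step — substituting the pair $(F^{[t]},G^{[t]})$ into $f$ and into $g$ — produces polynomials of degree at most $D$ times the degree of $(F^{[t]},G^{[t]})$, and coefficients whose bit-length grows by at most a constant factor plus an additive $O(\log(\text{number of monomials}))$ term. Iterating $l = O(\lceil\log n\rceil)$ times, the degree of $(F^{[l]},G^{[l]})$ is $D^{O(\log n)} = n^{O(1)}$, the number of monomials is polynomial in $n$, and the coefficients have bit-length polynomial in $n$. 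So the object we must output has polynomial size; the only question is the circuit complexity of producing it.

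Next I would set up the computation as an iterated matrix/array product. Each bicomposition step takes the coefficient array of $(F^{[t]},G^{[t]})$ and produces that of $(F^{[t+1]},G^{[t+1]})$ by: (i) forming all powers $(F^{[t]})^i (G^{[t]})^j$ for $i+j \le D$ — this is a constant number of polynomial multiplications of polynomials of polynomially-bounded degree and coefficient size; (ii) scaling by the constant coefficients of $f$ (resp. $g$) and summing. Polynomial multiplication of two polynomials with polynomially-many terms and polynomial-size integer coefficients is exactly iterated integer addition of polynomially-many products of pairs of polynomial-size integers, which is in \GNC\ (indeed in \FTCz, hence in any arithmetic class containing \FTCz). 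Since \GNC\ is closed under composition of a constant number of \GNC\ operations, a single bicomposition step is in \GNC.

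The main obstacle — and the reason the bound is \GNC\ rather than \FTCz — is that we must compose this step $O(\log n)$ times, and $O(\log n)$-fold iteration of a single \GNC\ layer is not obviously in \GNC: naively it gives \GNC$^2$ depth. The key point to push through here is that the iteration is an iterated product of low-degree polynomial maps, and such an iterated composition can be organized into a balanced binary tree of depth $O(\log\log n)$ over ``compose two maps of bounded degree'' operations — each such compose-two-maps operation being itself in \GNC\ by the argument above — so the total depth is $O(\log\log n) \cdot O(\log^{O(1)} n \cdot \text{(\GNC\ depth)})$, still polylogarithmic; combined with the polynomial-size bookkeeping this stays in \GNC. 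I would also need to check uniformity: the recursion tree and the arithmetic at each node are highly regular, so the circuit family is logspace-uniform. The routine parts I would not belabor are the exact exponents in the degree/coefficient bounds and the standard fact that truncated/scaled polynomial multiplication over the integers lies in \FTCz.
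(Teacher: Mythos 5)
There is a genuine mismatch between what you set out to prove and what the lemma, read together with Definition~\ref{def:Ccomputable}, actually asserts. The lemma only requires that the $l$-bicomposition be \emph{evaluated}: the input is the bit representation of the two integer arguments $\alpha,\beta$, and the output is the integer value $F^{[l]}(\alpha,\beta)$ (resp.\ $G^{[l]}(\alpha,\beta)$). You instead set out to produce the full coefficient array of $(F^{[l]},G^{[l]})$, which is a strictly harder task and is not what is needed (nor what is used later, where the bicomposition is always evaluated at a fixed starting point). The paper's argument is therefore much more direct: since $f,g$ are \emph{fixed} bivariate polynomials of constant degree, a single bicomposition step is a constant-depth, constant-fan-in arithmetic circuit; chaining $l=O(\lceil\log n\rceil)$ such steps, after converting the input bit strings to arithmetic values, yields an arithmetic circuit of depth $O(\log n)$, which is in \GNC\ essentially by definition. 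No coefficient bookkeeping and no balancing are required.

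Beyond the misreading, your route has two concrete gaps. First, at an internal node of your balanced composition tree you must compose two polynomial maps each of degree $n^{\Theta(1)}$; this requires forming $(F)^i(G)^j$ for $i+j$ up to $n^{\Theta(1)}$, i.e.\ iterated polynomial multiplication with polynomially many factors --- not ``a constant number of polynomial multiplications'' as in the base step, where the outer polynomial is the fixed, constant-degree $f$. This can be handled (e.g.\ via \cite{HAB} after packing polynomials into integers), but your argument does not establish it. Second, even granting that each node of the tree is a \GNC\ computation, stacking $O(\log\log n)$ such levels gives Boolean depth $O(\log n\cdot\log\log n)$; ``still polylogarithmic'' is not sufficient, since \GNC\ demands depth $O(\log n)$. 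So even for the harder problem you address, the bound you reach is not the one claimed.
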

\begin{proof}
From Definition~\ref{def:Ccomputable} and 
Proposition~\ref{prop:ratioOfCompositionIsBicomposition}, it suffices to prove
that the bicomposition of the ratio polynomials of $p$ is computable in
\GNC. 
Given a rational number as the bits of its numerator and denominator we can
first obtain the arithmetic values of its numerator and denominator in \GNC. 
Now we are done modulo the following claim.
\end{proof}
\begin{claim}
Let $f$ be a fixed bivariate polynomial with integral coefficients and let
$\alpha,\beta$ be two integers, then there is a constant depth arithmetic
circuit that takes $\alpha,\beta$ as inputs and outputs $f(\alpha,\beta)$.
\end{claim}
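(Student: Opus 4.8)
The plan is to exhibit an explicit constant-depth arithmetic circuit computing $f(\alpha,\beta)$, exploiting the fact that $f$ is \emph{fixed} — so its degree, its number of monomials, and the magnitude of its coefficients are all $O(1)$. Write $f(\alpha,\beta) = \sum_{(i,j)} c_{ij}\,\alpha^i\beta^j$ where the sum ranges over a constant number of index pairs with $i,j$ bounded by the (constant) degree $d$ of $f$. The point is that each building block needed here is either an input or can be produced at constant depth.

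\textbf{Key steps.} First I would build the monomials: a power $\alpha^i$ for $i \le d$ is a product of at most $d = O(1)$ copies of $\alpha$, which a binary multiplication tree computes in depth $O(\log d) = O(1)$; similarly for $\beta^j$; and then $\alpha^i\beta^j$ is one more multiplication. Next, each coefficient $c_{ij}$ is a fixed integer, hence available as a constant-size subcircuit over $\{0,1\}$ (and unary negation) of depth $O(1)$, so the term $c_{ij}\alpha^i\beta^j$ is obtained with one further multiplication. Finally, summing the constantly many terms $c_{ij}\alpha^i\beta^j$ is again a binary addition tree of depth $O(\log(\#\text{terms})) = O(1)$. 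Composing these three constant-depth stages yields a single arithmetic circuit of depth $O(1)$ (with $+,-,*$ gates) whose output is exactly $f(\alpha,\beta)$, as required. I would remark that this circuit is also uniform and of size $O(1)$ (independent of the bit lengths of $\alpha,\beta$), which is what makes it fit inside an \FTCz-bounded computation once the integers $\alpha,\beta$ are handed over as values, and hence completes the \GNC\ bound claimed in Lemma~\ref{lem:compositionGapNCo} via Proposition~\ref{prop:ratioOfCompositionIsBicomposition}.

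\textbf{Main obstacle.} There is no real obstacle — the statement is essentially a bookkeeping observation, and the only thing to be careful about is keeping the dependence on $f$ entirely in the \emph{constants} (depth, size, coefficient magnitudes) and not in anything that grows with the inputs $\alpha,\beta$; in particular one should note that although the bit length of the output $f(\alpha,\beta)$ does grow with the bit lengths of $\alpha$ and $\beta$, the \emph{number of arithmetic operations} and the \emph{depth} do not, which is precisely why constant depth suffices. The one subtlety worth flagging is that ``constant depth arithmetic circuit'' here means constant depth in the sense of arithmetic gates over $+,-,*$ (so that a single multiplication gate of large fan-in-2 product tree over a constant number of operands is still $O(1)$ depth); with this reading the claim is immediate from the three-stage construction above.
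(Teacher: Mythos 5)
Your proof is correct and follows essentially the same route as the paper: compute each monomial $\alpha^i\beta^j$ by a balanced multiplication tree of depth $O(\log d)$, multiply by the (constant) coefficient, and sum the constantly many terms by an addition tree, all of which is $O(1)$ depth since $d$ is fixed. No gaps.
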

\begin{proof}(of Claim) The depth of the circuit is seen to be bounded by 
$O(\lceil\log{d}\rceil)$ where $d$ is the degree of the polynomial - we just
need to find $\alpha^i \beta^j$ by a tree of height 
$max(\lceil\log{i}\rceil, \lceil\log{j}\rceil) + 1$ multiply it with the
coefficient and add up the results by a tree of depth $\lceil\log{(d+1)}\rceil$.
Since degree is a constant we are done. 
\end{proof}

We now prove an orthogonal bound on the complexity of compositions.
But before that we need a small lemma:
\begin{lemma}\label{lem:reachLogDepth}
Suppose $G$ is a layered graph of width $O(n)$ and depth $O(\log{n})$ then
reachability (from a vertex in the first layer to a vertex in the last layer) 
can be done by an \AC-circuit of depth $O(\log{\log{n}})$
\end{lemma}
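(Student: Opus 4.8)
The plan is to express reachability in $G$ as an iterated Boolean matrix product and then to evaluate that product with a shallow tree of single Boolean matrix multiplications, exploiting that one such multiplication already lives in $\ACz$.

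First I would set up the matrices. Since $G$ is layered with $L = O(\log n)$ layers, each of size $w = O(n)$, let $A_i$ be the $w \times w$ Boolean adjacency matrix recording edges from layer $i$ to layer $i+1$ (padding with zero rows/columns so all layers have the same width). A vertex $u$ in the first layer reaches a vertex $v$ in the last layer if and only if $(A_1 A_2 \cdots A_{L-1})_{uv} = 1$, where the product is taken over the Boolean semiring, i.e. $(AB)_{ij} = \bigvee_k (A_{ik} \wedge B_{kj})$. The bit we must output is therefore $\bigvee_{u,v} (A_1 A_2 \cdots A_{L-1})_{uv}$.

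Next I would recall that the Boolean product of two $w \times w$ matrices is computed by an $\ACz$ circuit: one layer of $\wedge$-gates of fan-in $2$ followed by one layer of unbounded-fan-in $\vee$-gates, of size $O(w^3)$ and depth $2$. Boolean matrix multiplication is associative, so $A_1 A_2 \cdots A_{L-1}$ can be computed by a balanced binary tree with $L-1$ leaves: the leaves hold the $A_i$, and each internal node multiplies the two $w \times w$ matrices produced by its children. This tree has depth $\lceil \log_2 (L-1) \rceil = O(\log\log n)$; stacking the depth-$2$ blocks over the $O(\log\log n)$ levels gives depth $O(\log\log n)$, and the final $\vee$ over all entries adds only $O(1)$. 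The total size is $O(L)$ multiplication blocks, i.e. $O(n^3 \log n)$, which is polynomial, so the whole construction is an $\AC$-circuit of depth $O(\log\log n)$. Uniformity is immediate, since the tree shape and each multiplication block are determined by $n$ and $L$ alone.

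The only point that needs care is the size accounting: every level of the tree writes out full $w \times w$ matrices and there are $O(\log n)$ of them, so one must verify that $w = O(n)$ keeps the size polynomial (it does), and that working over the Boolean semiring prevents any growth in the bit-length of entries (they remain single bits). I expect no genuine obstacle here — the lemma is essentially the observation that iterating a constant-depth matrix-product circuit $O(\log\log n)$ times along a balanced tree suffices for $O(\log n)$-depth layered reachability, rather than the $O(\log n)$ sequential products one would use naively.
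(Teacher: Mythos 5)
Your proof is correct and takes essentially the same route as the paper: the paper's one-line observation that ``reachability between two layers separated by another layer is in \ACz'' is precisely your single Boolean matrix multiplication, and the implicit halving recursion is your balanced product tree of depth $O(\log\log n)$. You have simply made the size and uniformity accounting explicit, which the paper leaves to the reader.
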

\begin{proof}
It suffices to show that the reachability between two layers which are
separated by another layer is in \ACz, which is clear.
\end{proof}
\begin{note}
In fact, from the proof it is clear that,
if $G$ contains $O(\log{n})$ identical layers then this reachability
is in the class $\mathsf{FOLL}$ defined in \cite{BKLM}.
\end{note}
%Now we are ready to prove:
\begin{lemma}\label{lem:compositionTCLL}
Let $p$ be a fixed integral polynomial, then given $n$ in unary the
$l$-bicomposition (for $l = O(\lceil\log{n}\rceil)$) of $p$ is computable in 
\TCLL.
\end{lemma}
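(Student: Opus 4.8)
The plan is to combine the circuit-structure observations already available with the reachability lemma just proved. By Proposition~\ref{prop:ratioOfCompositionIsBicomposition} and Definition~\ref{def:Ccomputable} it again suffices to compute the $l$-bicomposition of the ratio polynomials $(f,g)$ of $p$, for $l = O(\lceil\log n\rceil)$. The key structural point is that the natural arithmetic circuit computing $(F^{[l]},G^{[l]})$ is obtained by stacking $l$ identical blocks, where each block takes the pair $(F^{[t]},G^{[t]})$ and produces $(F^{[t+1]},G^{[t+1]})$ by substituting into the fixed bivariate polynomials $f$ and $g$. By the Claim proved inside Lemma~\ref{lem:compositionGapNCo}, each such block has constant depth (since $f,g$ have constant degree), so altogether we get an arithmetic circuit of depth $O(\log n)$.

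The next step is to simulate this arithmetic circuit by a Boolean circuit. Here one must track bit-lengths: after $t$ iterations the integers $F^{[t]},G^{[t]}$ have magnitude at most a constant raised to the power $(\deg f)^{t}$, so for $t = O(\log n)$ their bit-length is polynomially bounded in $n$. Thus each arithmetic gate (a constant-degree polynomial in two poly-bit integers, hence really a bounded product-of-sums expression in those integers) is implementable by a \TCz\ sub-circuit operating on polynomially many bits — iterated addition and a constant number of multiplications are in \FTCz. Stacking these gives a Boolean circuit of the following shape: $O(\log n)$ layers, each layer being a constant-depth \TCz\ block, all blocks identical, and polynomial width. I would then argue that we may push the threshold gates to the bottom: first replace each \TCz\ block by a (polynomially larger) \ACz\ block with an extra bottom layer of threshold gates feeding in precomputed majority values — or, more cleanly, observe that the composed object is an $O(\log n)$-fold iteration of a fixed \FACz\ block sitting on top of one \TCz\ layer that prepares the necessary threshold bits, which is exactly the shape defining \TCLL\ once we invoke Lemma~\ref{lem:reachLogDepth}.

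The main obstacle is precisely this last rearrangement: making the ``iterated block'' genuinely a block of \FACz\ circuits with a single \TCz\ on top, rather than $O(\log n)$ interleaved \TCz\ layers (which would only give us \TCz\ of the iterated-\ACz\ part, not \TCLL). The trick I expect to use is that threshold/majority on a polynomial number of bits, while not in \ACz, becomes an \ACz\ computation once one is allowed to query, for each relevant threshold, the sorted/counted value as an auxiliary input — and all such auxiliary values across all $O(\log n)$ layers can be produced simultaneously by a single \TCz\ pre-pass, because the set of integers that can ever appear at layer $t$ is determined (as a circuit, not as specific values) independently of the actual input $n$ up to its polynomial size bound. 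Concretely, Lemma~\ref{lem:reachLogDepth} lets us resolve the ``carry propagation / wiring'' of the $O(\log n)$ identical layers of width $O(\mathrm{poly}(n))$ by an \ACz-circuit iterated $O(\log\log n)$ times, and the remaining arithmetic (the threshold content of each layer) is folded into the top \TCz. Assembling these pieces places the $l$-bicomposition in \TCLL, completing the proof.
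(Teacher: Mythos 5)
There is a genuine gap here. Your reduction to stacking $l=O(\lceil\log n\rceil)$ identical constant-depth blocks is fine as far as it goes, but it leaves you with $O(\log n)$ interleaved \TCz\ layers acting on intermediate values $F^{[t]},G^{[t]}$ of $n^{O(1)}$ bits, and the rearrangement you propose to convert this into the \TCLL\ shape does not work. The auxiliary threshold/majority values that you want a single \TCz\ pre-pass to supply are thresholds of the \emph{intermediate results}, which depend on the input bits of $\alpha,\beta$ through all earlier layers; producing them up front is exactly the problem being solved, not something ``determined independently of the actual input.'' Moreover, Lemma~\ref{lem:reachLogDepth} is not applicable to your layered object as you invoke it: that lemma handles reachability in a layered \emph{graph of polynomial width}, whereas each of your layers acts on a state space of $2^{n^{O(1)}}$ possible intermediate values, i.e.\ exponential width. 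As written, your construction only yields a threshold circuit of depth $O(\log n)$ (an \NCo-type bound, no better than Lemma~\ref{lem:compositionGapNCo}), not membership in \TCLL.

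The missing idea --- and the route the paper takes --- is to shrink the state space by working modulo small primes. Since the value of the $l$-bicomposition on $\alpha,\beta$ has only $n^{O(1)}$ bits, it suffices to compute it modulo every $O(\log n)$-bit prime $q$ and recombine by Chinese remaindering, which is in \TCz\ by \cite{HAB}; this recombination is precisely the single \TCz\ circuit on top that the definition of \TCLL\ allows. For a fixed such $q$, the state after each iteration is just the pair of residues of the ratio polynomials modulo $q$, an element of a domain of size $q^2=n^{O(1)}$. One step of the iteration is then a fixed bipartite graph $H_q$ on these $q^2$ states (with an edge from $ab$ to $a'b'$ iff $f(a,b)\equiv a'$ and $g(a,b)\equiv b'$ modulo $q$), and the $l$-fold bicomposition modulo $q$ is reachability through $l=O(\log n)$ identical copies of $H_q$ --- a polynomial-width layered graph, so Lemma~\ref{lem:reachLogDepth} (and the Note following it) gives it by $O(\log\log n)$ iterations of an \FACz\ block. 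Without this modular reduction your argument never gets the width down to polynomial, so the claimed \TCLL\ upper bound is not established.
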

\begin{proof}
Notice that if $\alpha,\beta$ are some fixed integers, the value of the
$l$-bicomposition of the ratio functions on $\alpha,\beta$ is bounded
by an $n^{O(1)}$ bit integer. It suffices to compute the value of this
composition modulo all $O(\log{n})$ bit primes, since we can do Chinese
Remaindering in \TCz. Fix an $O(\log{n})$ bit prime $q$ and construct the
following bipartite graph $H_q$ on vertices $S,T$ (where $|S| = |T| = q^2$ )
and both $S,T$ consist of pairs $ab$, $a,b \in \{0,\ldots,q-1\}$.
$(ab,a'b')$ is an edge iff $f(a,b) \equiv a'\bmod{q}$ and
$g(a,b) = b'\bmod{q}$ where $f,g$ are the ratio functions
of $p$. Further, let $G_q$ be the layered graph obtained by taking 
$l$ layers of $H_q$. Then it is clear that reachability in $G_q$ from the
first to the last layer is exactly equivalent to the values of the
$l$-compositions modulo $q$. We are done with the aid of 
Lemma~\ref{lem:reachLogDepth} and \cite{HAB}.
\end{proof}

The following is a consequence of the definitions and of \cite{HAB}.

\begin{lemma}\label{lem:unaryApproxRatio}
If $p$ is an integral polynomial which is $\mathcal{C}$-computable
(\FTCz $\subseteq \mathcal{C}$), then on input $m,n$ in unary,
where $m > n$, we can obtain the $n$-th bit of some number that differs from
$p(\frac{\alpha}{\beta})$, by at most $2^{-m}$ in $\mathcal{C}$.
\end{lemma}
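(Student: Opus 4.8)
\textbf{Proof proposal for Lemma~\ref{lem:unaryApproxRatio}.}

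The plan is to reduce the claim to the exact (integer) computation of the numerator and denominator of $p(\frac{\alpha}{\beta})$, followed by a bit-extraction from a rational whose numerator and denominator are known integers. First I would recall that since $p$ is $\mathcal{C}$-computable, its two ratio polynomials $f,g$ (bivariate, integral coefficients, fixed) are computable in $\mathcal{C}$ when fed the bit representations of $\alpha$ and $\beta$; thus the integers $N = f(\alpha,\beta)$ and $D = g(\alpha,\beta)$ are available in $\mathcal{C}$, and $p(\frac{\alpha}{\beta}) = N/D$ exactly. Note both $N$ and $D$ are integers with polynomially many bits in the input length (the degree of $p$ is a constant, so $|N|,|D|$ are bounded by a fixed power of $\max(|\alpha|,|\beta|)$), so all subsequent arithmetic is on poly-length integers.

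Next I would produce the $n$-th bit (after the binary point) of a number within $2^{-m}$ of $N/D$. The natural candidate is the truncation of $N/D$ to $m$ bits, i.e. the integer $Q = \lfloor 2^{m} N / D \rfloor$, from which the desired bit is simply bit number $(m-n)$ of the integer $Q$ (reading from the binary point), and $Q \cdot 2^{-m}$ differs from $N/D$ by at most $2^{-m}$ as required. So it suffices to compute $\lfloor 2^m N / D\rfloor$ — equivalently, integer division of the poly-length integer $2^m N$ by the poly-length integer $D$ — in $\mathcal{C}$, and then read off one of its bits. Here is where I invoke \cite{HAB}: iterated multiplication, and hence integer division (division of one poly-bit integer by another), is in \FTCz\ $\subseteq \mathcal{C}$; forming $2^m N$ from $N$ and the unary $m$ is trivial (a shift, which is in \FACz), and extracting a single named bit of a poly-length integer is again in \FACz\ once we have the bit positions indexed. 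Composing these $\mathcal{C}$-computations (the two ratio evaluations, the shift, the division, the bit selection), and using that $\mathcal{C}$ contains \FTCz\ and is closed under the relevant compositions, puts the whole procedure in $\mathcal{C}$.

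The only real subtlety — and the step I expect to require the most care — is the boundary/rounding issue: the crude truncation $Q = \lfloor 2^m N/D\rfloor$ is a number within $2^{-m}$ of $N/D$ on the ``wrong'' side only, which is fine for the statement as written (it asks merely for \emph{some} number differing by at most $2^{-m}$), but one must be careful that ``the $n$-th bit of that number'' is well defined, i.e. that we have committed to a specific dyadic approximant $S = Q 2^{-m}$ and report $S_n$. Since $S$ is an explicit dyadic rational with denominator $2^m$ and $m > n$, its $n$-th bit is exactly bit $(m-n)$ of the integer $Q$, with no ambiguity. (If one wanted the nearest dyadic rational rather than the truncation, one would compute $\lfloor 2^m N/D + 1/2\rfloor = \lfloor (2^{m+1}N + D)/(2D)\rfloor$ instead — still a single integer division of poly-length integers, hence still in \FTCz\ $\subseteq\mathcal{C}$ — but the truncation already suffices.) Everything else is routine: constant degree keeps all intermediate integers of polynomial length, and \cite{HAB} supplies division; hence the lemma.
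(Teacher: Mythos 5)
Your proposal is correct and matches the paper's (very terse) justification: the paper simply remarks that the lemma ``is a consequence of the definitions and of \cite{HAB}'', and your argument is exactly that remark spelled out — use $\mathcal{C}$-computability to obtain the poly-length integers $N=f(\alpha,\beta)$, $D=g(\alpha,\beta)$, then use \cite{HAB} (integer division/iterated multiplication in uniform \FTCz $\subseteq\mathcal{C}$) to compute the truncation $\lfloor 2^m N/D\rfloor$ and read off the required bit. The rounding discussion is a sensible extra precaution but nothing beyond the paper's intended route.
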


Now we describe the binary analog of the above lemma. 
\begin{note}
In the remaining part of this section we denote polynomially bounded
\emph{integers} by lower case letters e.g. $n,t$. We denote those with polynomial 
number of bits  by uppercase letters e.g. $N,T$.
Finally we denote those with exponentially many bits by 
calligraphic letters e.g. $\mathcal{N}, \mathcal{D}$. This notation 
does not apply to rationals like $u,\sigma$.
\end{note}
\begin{lemma}\label{lem:binaryApproxRatio}
Let $\mathcal{N}$ and $\mathcal{D}$  be the outputs of two \SLP's.
Computing the
$N^{th}$ (where $N$ is input in binary) bit of an approximation (accurate 
up to an additive error of $2^{-(N+1)}$) of $\frac{\mathcal{N}}{\mathcal{D}}$  
is in $\PH^{\PP^{\PP}}$.
\end{lemma}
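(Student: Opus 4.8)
The plan is to mimic the technique of Allender et al.\ for computing a bit of a single \SLP, but applied to a quotient. The first step is to reduce the quotient-bit problem to two sub-problems: (i) locating the binary point, i.e.\ computing $\lfloor \log_2 |\mathcal{N}/\mathcal{D}| \rfloor$, so that we know which bit of the dyadic expansion the ``$N$-th bit'' refers to, and (ii) given an index $j$ with $|j|$ polynomially bounded, deciding the $j$-th bit of $\mathcal{N}/\mathcal{D}$. Both $\mathcal{N}$ and $\mathcal{D}$ have at most $2^{n^{O(1)}}$ bits since they are outputs of \SLP's of polynomial length, so all relevant bit indices are themselves $n^{O(1)}$-bit integers, i.e.\ representable in binary as the ``uppercase'' quantities of the preceding Note.

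For step (ii), observe that the $j$-th bit of $\mathcal{N}/\mathcal{D}$ (say for $j \geq 0$ past the point, the case $j<0$ being symmetric) is $\lfloor 2^{j+1}\mathcal{N}/\mathcal{D}\rfloor \bmod 2$, equivalently it asks whether $(2^{j+1}\mathcal{N}) \bmod \mathcal{D}$ lies in the upper or lower half of $[0,\mathcal{D})$ after the appropriate shift; more cleanly, the bit equals $\big\lfloor 2^{j+1}\mathcal{N}/\mathcal{D} \big\rfloor - 2\big\lfloor 2^{j}\mathcal{N}/\mathcal{D} \big\rfloor$. So it suffices to decide predicates of the form $2^{k}\mathcal{N} \geq m\mathcal{D}$ for integers $k,m$ with polynomially many bits. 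Since $2^k$ is itself an \SLP output and products/sums of \SLP's are \SLP's, this is a comparison of two \SLP-defined integers, for which Allender et al.\ give a $\PH^{\PP^{\PP}}$ (in fact $\mathsf{CH}$) procedure: the key gadget there is that the $i$-th bit of an \SLP output can be extracted by a $\PP^{\PP}$ computation via counting arguments on the iterated-squaring/Chinese-remainder representation, and bit-by-bit comparison then costs one more quantifier alternation over \PP\ oracles. I would invoke their bit-extraction for \SLP's as a black box and note that it relativizes appropriately so that the extra comparison and the arithmetic combining $\mathcal{N}, \mathcal{D}, 2^k, m$ into a single \SLP stays inside the same level of the counting hierarchy.

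For step (i), computing $\lfloor \log_2|\mathcal{N}|\rfloor$ and $\lfloor \log_2|\mathcal{D}|\rfloor$ is exactly the ``most significant bit position'' of an \SLP output, which again is handled by the Allender et al.\ machinery (it is the degenerate case of their bit-counting), and $\lfloor \log_2 |\mathcal{N}/\mathcal{D}|\rfloor$ is then one of the two candidates $\lfloor\log_2|\mathcal{N}|\rfloor - \lfloor\log_2|\mathcal{D}|\rfloor$ or that minus one, the choice being decided by a single comparison of the form in step (ii). Combining, we get the $N$-th bit of an error-$2^{-(N+1)}$ approximation to $\mathcal{N}/\mathcal{D}$ with a constant number of nested $\PP^{\PP}$-type computations, hence in $\PH^{\PP^{\PP}}$; the stated accuracy is automatic once we can name bits of the exact quotient, since truncating the exact dyadic expansion at position $N+1$ already has error below $2^{-(N+1)}$.

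The main obstacle is the quotient structure: Allender et al.\ work with a single integer, and the numbers here are $2^{n^{O(1)}}$-bit \emph{rationals} with denominators that are themselves enormous \SLP outputs, so one cannot simply clear denominators ``for free'' — one must be careful that every intermediate quantity appearing in the counting predicates ($2^k\mathcal{N} - m\mathcal{D}$, sign of the discriminant between the two candidate exponents, etc.) is still presentable as a single polynomial-length \SLP and that the reductions among these comparisons introduce only $O(1)$ further oracle levels rather than a number growing with $n$. Verifying that the counting-hierarchy level does not blow up — i.e.\ that the whole computation collapses to $\PH^{\PP^{\PP}}$ and not merely some finite-but-$n$-dependent level — is the delicate bookkeeping step, and is where I would spend the most care, appealing to the fact that \PP\ is closed under the relevant logspace/\TCz-uniform combining operations.
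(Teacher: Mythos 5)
There is a genuine gap in step (ii), and it is precisely the difficulty that the paper's proof is designed to circumvent. You reduce the $j$-th bit of $\mathcal{N}/\mathcal{D}$ to predicates of the form $2^{k}\mathcal{N} \geq m\mathcal{D}$ ``for integers $k,m$ with polynomially many bits,'' but the integer $m$ you actually need is (essentially) $\lfloor 2^{j}\mathcal{N}/\mathcal{D}\rfloor$, and since $\mathcal{N},\mathcal{D}$ are \SLP\ outputs with up to $2^{n^{O(1)}}$ bits, this $m$ also has up to $2^{n^{O(1)}}$ bits. It can be neither guessed by a \PH-level quantifier nor counted to by a single \PP\ machine (whose accepting-path counts are only singly exponential), and there is no reason it is representable by a polynomial-size \SLP. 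The same problem afflicts the ``upper/lower half of $[0,\mathcal{D})$'' reformulation: computing $2^{j}\mathcal{N}\bmod\mathcal{D}$ for an exponentially long modulus $\mathcal{D}$ \emph{is} the division problem, not a black-box instance of Allender et al.'s \SLP\ comparison. You flag this obstacle yourself in your last paragraph, but flagging it is not the same as resolving it; as written, the reduction to \PosSLP-style comparisons does not go through.

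The paper's proof avoids ever forming the exact quotient. It first finds $T$ with $2^{T-1}\leq\mathcal{D}<2^{T}$ by binary search on the exponent using polynomially many \PosSLP\ queries (this part is close in spirit to your step (i)), writes $\mathcal{D}^{-1}=2^{-T}(1-u)^{-1}$ with $u=1-\mathcal{D}2^{-T}$, $|u|\leq\frac12$, and truncates the geometric series after $N+2$ terms, incurring error below $2^{-(N+1)}$ --- this is why the lemma is stated for an \emph{approximation} rather than the exact quotient, and why your remark that ``the stated accuracy is automatic once we can name bits of the exact quotient'' misses the point. The resulting quantity $\sum_{I=0}^{N+1}\mathcal{N}(2^{T}-\mathcal{D})^{I}2^{(N+1-I)T}$ is a sum of exponentially many terms, each an \SLP\ output; its bits are then extracted by the Chinese-remainder/approximation-function technique of Theorem~4.2 of Allender et al.\ together with the Maciel--Th\'erien iterated-addition circuits, landing in $\PH^{\PP^{\PP}}$. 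If you want to complete your argument, you need some such device for inverting $\mathcal{D}$; the comparison oracle alone does not suffice.
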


\begin{proof}
 We will compute an under approximation of $\frac{\mathcal{N}}{\mathcal{D}}$ 
with error less than $2^{-(N+1)}$.

Let $u=1-\mathcal{D}{2^{-T}}$ where $T \geq 2$ is an integer such that 
$2^{T-1}\leq \mathcal{D} < 2^T$.  Hence $|u|\leq \frac{1}{2}$.

Notice that the higher order bit of 
$T$ can be found by using \PosSLP\,: we just need to find an integer $t$
such that $2^{2^{t}} \leq \mathcal{D} < 2^{2^{t+1}}$ and both these questions
are \PosSLP\, questions. Having found $T_i$ a lower bound of $T$ correct up to
the higher order $i$ bits of $T$, i.e. $2^{T_i} \leq \mathcal{D} < 2^{T_i + 2^i}$, we check if $2^{T_i + 2^{i-1}} \leq \mathcal{D}$ and update $T_{i-1}$ to
$T_i + 2^{i-1}$ iff the inequality holds (and $T_{i-1} = T_i$ otherwise).
Thus by asking a polynomial number of \PosSLP\, queries, we can determine 
$T$, so each bit of $T$ is in $\PH^{\PP^{\PP}}$.

Now consider the series 
\[
 \mathcal{D}^{-1}=2^{-T}(1-u)^{-1} = 2^{-T}(1 + u + u^2+...)
\]
Set $\mathcal{D}' =2^{-T}(1 + u + u^2+...u^{N+1})$, then\\
\[
 \mathcal{D}^{-1}-\mathcal{D}' \leq \,\ 2^{-T}\sum_{I  >  N+1} 2^{-I} <\,\ 2^{-(N+1)}
\] 
Now we need to compute $N^{th}$ bit of 
\begin{equation}
\frac{\mathcal{N}}{2^T} \sum_{I=0}^{N+1}(1-\frac{\mathcal{D}}{2^T})^I = \frac{1}{2^{(N+2)T}} \sum_{I=0}^{N+1}\mathcal{N}(2^T -\mathcal{D})^I 2^{(N+1 - I)T}
\end{equation}
We need to compute the $M = {N+(N+2)T}^{th}$ bit of: $\mathcal{Y} = \sum_{I=0}^{N+1}{\mathcal{N}(2^T -\mathcal{D})^I 2^{(N+1-I)T}}$.
Since each term in summation is large and there are exponentially many terms in summation, so we will do computation modulo small primes.
Let $\mathcal{Y}_I$ denote the $I^{th}$ term of summation.
% These $X_i$ can be written as 
%output of \SLP \,\ , since it contains iterated addition, multiplication
%subtraction and division of some power of $2$ (which is just right shift only.).

Let $\mathcal{M}_n$ be the product of all odd primes less than $2^{n^2}$. 
For such primes $P$ let $H_{P,n}$ denote inverse
of $\frac{\mathcal{M}_n}{P}\bmod{P}$. Any integer $0 \leq Y_I < \mathcal{M}_n$ 
can be represented uniquely as a list ($Y_{I,P}$), where $P$ runs over the 
odd primes bounded by $2^{n^2}$ and $Y_{I,P} = \mathcal{Y}_I\bmod{P}$.

Define the family of approximation functions $app_n(\mathcal{Y})$ to be 
%$\sum_P{\mathcal{B}_P}$ , where 
%$\mathcal{B}_P = 
$\sum_{P}{\sum_{I}{Y_{I,P}H_{P,n}\sigma_{P,n}}}$
where $\sigma_{P,n}$ is the 
result of truncating the binary expansion of $\frac{1}{P}$ after $2^{n^4}$ 
bits. 
Notice that for sufficiently large $n$, and $\mathcal{Y} < \mathcal{M}_n$,
$app_n(\mathcal{Y})$ is within $2^{-2^{n^3}} < 2^{-(N+1)}$ of $\mathcal{Y}/\mathcal{M}_n$
as in the proof of Theorem~4.2 of \cite{AllenderEtAl}.
Continuing to emulate that proof further and using the Maciel-Therien (see \cite{MT}) circuit
for iterated addition, we get the same bound as for \PosSLP\, 
in \cite{AllenderEtAl} viz. $\PH^{\PP^{\PP}}$. Notice that we have a double
summation instead of a single one in \cite{AllenderEtAl}, yet it can be 
written out as a large summation and thus does not increase the depth of the
circuit.
\end{proof}

%% 
%% Now we define two oracles: 
%% \begin{itemize}
%%  \item $A = \{(\mathcal{S},J,b,1^n):$ the $J^{th}$ bit of binary expansion of 
%% $app_n(\sum_{I=0}^{N+1}{\mathcal{X}_I})$ is $b$, 
%% where $X_i$ is number represented by \SLP \,\ $\mathcal{S}$ and $J$ given in 
%% binary $\}$.
%% \item  $B= \{(\mathcal{S},J,b,P,1^n):$ the $J^{th}$ bit of binary expansion of 
%% $\mathcal{B}_P$ is $b$, $P < 2^{O(n)}$ is an odd prime, $X_{I,P}= \mathcal{X}_I\bmod{P}$, $\mathcal{X}_I$ is the number 
%%              represented by \SLP \,\ $\mathcal{S}$ and $J$ given in binary~$\}$.  
%% \end{itemize}
%% 
%% Now using Corollary 3.4.2 of \cite{MT} $A$ can be computed in \PhB \,\ , but by equation $(1)$ we need to compute on more level of summation , 
%% for that we need to make one more call to oracle $B$.\\
%% Now final complexity of problem is \PhBB \,\ , by Lemma $4.4$ of \cite{AllenderEtAl} $B \in$ \PhP.\\
%% Now \PhBB $\subseteq$ \PhPPPhPPPhPP= \Phie (Applying Toda's Theorem \cite{Toda}).\qed
%% 
%% 
%% As a consequence of this Lemma, for $p$ be a fixed integral polynomial and  given $n$ in binary then
%% $l$-bicomposition (for $l = O(\lceil\log{n}\rceil)$) of $p$ is computable in \Phie.  
%% 

\section{Establishing Quadratic Convergence}\label{sec:quadConv}
We use the famous Newton-Raphson method to approximate Algebraic Numbers.
The treatment is tailored with our particular application in mind. 
There are some features in the proof (for instance a careful use of Markoff's 
result on lower bounding the derivative of a polynomial) which led us to
prove the correctness and rate of convergence of the method from scratch
rather than import it as a black-box.
\begin{definition}(Newton-Raphson)
Given an integral polynomial $p$ and a starting point $x_0$, recursively
define:
\[
	x_{i+1} = x_i - \frac{p(x_i)}{p'(x_i)},
\]
whenever $x_i$ is defined and $p'(x_i)$ is non-zero.
\comment{
whenever both of the following hold:
\begin{enumerate}
\item $x_i$ is defined
\item $p'(x_i)$ is non-zero
\end{enumerate}
}
\end{definition}

Recall good intervals from Definition~\ref{def:goodInterval}.
\begin{definition}\label{def:errorNewton-Rapson}
Given a good interval $I$ for an integral polynomial $p$, let
$\epsilon_i$ denote the error in the $i^{th}$ iteration of Newton-Raphson
i.e. $\epsilon_i = |x_i - \alpha|$ when starting with $x_0 \in I$.
Notice that $\epsilon_i$ is defined only when $x_i$ is.
\end{definition}

\begin{definition}\label{def:quadConv}
We say that Newton-Raphson converges quadratically (with parameter $M$)
for an integral polynomial $p$ whenever  $M$ is 
a non-negative real such that for any interval $I$ which is good for $p$ 
and of length at most $\min(\frac{1}{4M^2},\frac{1}{4})$, it is the case 
that the errors at consecutive iterations (whenever both are defined)
satisfy %the following condition: 
 $\epsilon_{i+1} \leq M \epsilon_i^2$.
%% \begin{itemize}
%% \item $\epsilon_{i+1} \leq M \epsilon_i^2$,
%% \item $x_i \in I$  for  $i > 0$
%% \end{itemize}
\end{definition}

The following Lemma shows that not only are the errors at all iterations defined
under the assumptions of Lemma~\ref{def:quadConv} but also, that, Newton-Raphson
converges ``quickly''.
\begin{lemma} \label{lem:condQuadConv}
 If Newton-Raphson converges quadratically (with parameter $M$)
for an integral polynomial $p$, then 
for every $i \geq 0$, the $i^{th}$ iterand, $x_i$, is at distance 
at most $\min(\frac{1}{4M^2},2^{-2^{i/2}})$ from the unique root of $p$ in any
good interval $I$ of length $|I| \leq \min(\frac{1}{4},\frac{1}{4M^2})$.
In particular, $x_i \in I$ for every $i \geq 0$.
\end{lemma}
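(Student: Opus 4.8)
The plan is to prove both claims simultaneously by induction on $i$: the invariant is that $x_i$ is defined and $\epsilon_i = |x_i - \alpha| \leq \min(\frac{1}{4M^2}, 2^{-2^{i/2}})$. The base case $i = 0$ is immediate because $x_0 \in I$ and $|I| \leq \min(\frac14, \frac{1}{4M^2})$, so $\epsilon_0 \leq |I| \leq \min(\frac{1}{4M^2}, \frac12)$, and $2^{-2^0} = \frac12$ covers the second term. For the inductive step, suppose $\epsilon_i \leq \min(\frac{1}{4M^2}, 2^{-2^{i/2}})$; I must first argue $x_{i+1}$ is \emph{defined}, i.e.\ that $p'(x_i) \neq 0$. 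Since $\epsilon_i \leq \frac{1}{4M^2} \leq |I|$ is at most the length of $I$ (well, at most $\min(\frac14,\frac{1}{4M^2})$) and $x_0 \in I$ with $\alpha \in I$, the point $x_i$ lies within distance $|I|$ of $\alpha$; more carefully, since by the previous iteration $x_i$ is within $\frac{1}{4M^2}$ of $\alpha$ and $I$ has length $\leq \frac{1}{4M^2}$, one checks $x_i$ stays in a slightly enlarged interval still free of roots of $p'$ — actually the cleanest route is to note that $I$ being good means $p'$ has no root in $I$, and the quadratic-convergence hypothesis is exactly engineered (via Definition~\ref{def:quadConv}) so that the iterands never leave $I$; so I should instead derive $x_{i+1} \in I$ as part of the induction rather than assume it.

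The heart of the argument is the quantitative estimate. Given that $x_i$ is defined and $x_i \in I$ (so $p'(x_i) \neq 0$), Definition~\ref{def:quadConv} gives $\epsilon_{i+1} \leq M \epsilon_i^2$. Then I split into the two bounds. For the first: $\epsilon_{i+1} \leq M \epsilon_i^2 \leq M \cdot \epsilon_i \cdot \frac{1}{4M^2} = \frac{\epsilon_i}{4M} \leq \frac{1}{4M} \cdot \frac{1}{4M^2} \leq \frac{1}{4M^2}$ (using $M \geq \frac14$ is not even needed; if $M < \frac14$ one uses instead $\epsilon_{i+1} \leq M\epsilon_i^2 \leq \frac14 \epsilon_i^2 \leq \frac14 \cdot \frac{1}{4M^2}$, so in all cases $\epsilon_{i+1} \leq \frac{1}{4M^2}$, in fact $\epsilon_{i+1} \leq \frac14 \epsilon_i \leq \frac{1}{4M^2}$). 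The monotone decrease $\epsilon_{i+1} \leq \frac14\epsilon_i < \epsilon_i$ immediately yields $x_{i+1} \in I$: since $\epsilon_{i+1} < \epsilon_i \leq \cdots \leq \epsilon_0 \leq |I|$ and all $x_j$ lie on the same side considerations — more precisely, $x_{i+1}$ is within $|I|$ of $\alpha \in I$; to conclude $x_{i+1}\in I$ I use that $I$ contains a ball around $\alpha$ of radius $\min(\epsilon_i)$... this needs a touch of care, and is the one place I'd be careful (see below).

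For the doubly-exponential bound: I claim $\epsilon_i \leq 2^{-2^{i/2}}$ propagates. From $\epsilon_{i+1} \leq M\epsilon_i^2$ and $\epsilon_i \leq \frac{1}{4M^2}$ (so $M\epsilon_i \leq \frac{1}{4M} \leq 1$ when $M \geq \frac14$; when $M<\frac14$ we have $M\epsilon_i^2 \le \frac14\epsilon_i^2\le\epsilon_i^2$), we get $\epsilon_{i+1} \leq \epsilon_i^2 \cdot \max(M\epsilon_i^{-1}\cdot\epsilon_i, 1)$ — cleaner: in both cases $\epsilon_{i+1}\le \epsilon_i^2$ provided $M\epsilon_i\le 1$, which holds since $\epsilon_i\le\frac{1}{4M^2}\le\frac1M$ (for $M\ge 1$) and for small $M$ directly. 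Hence $\epsilon_{i+1} \leq \epsilon_i^2 \leq (2^{-2^{i/2}})^2 = 2^{-2^{i/2+1}}$, and since $2^{i/2+1} \geq 2^{(i+1)/2}$ (as $i/2 + 1 \geq (i+1)/2 \iff 1 \geq 1/2$), we conclude $\epsilon_{i+1} \leq 2^{-2^{(i+1)/2}}$, closing the induction.

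\textbf{Main obstacle.} The delicate point is the deduction ``$x_{i+1}\in I$,'' i.e.\ turning the bound on $\epsilon_{i+1}=|x_{i+1}-\alpha|$ into membership in the interval $I=[a,b]$. Knowing $|x_{i+1}-\alpha|$ is small does not by itself place $x_{i+1}$ inside $[a,b]$ unless $\alpha$ is at distance $\geq |x_{i+1}-\alpha|$ from both endpoints — which need not hold if $\alpha$ sits near an endpoint of $I$. The resolution must come from the geometry of Newton's step: one shows directly that if $x_i\in I$ then the Newton update $x_{i+1}=x_i - p(x_i)/p'(x_i)$ lands in $I$, using that $p$ has constant sign behavior ($p,p'$ nonzero, $p''$ nonzero so $p'$ monotone) on the good interval $I$ — this is the classical fact that Newton–Raphson on a convex/concave monotone piece is \emph{monotone and stays on one side of the root}. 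So I would insert a short geometric lemma: on a good interval, once $x_i$ is on the ``correct'' side of $\alpha$ the iteration is monotone decreasing toward $\alpha$ and hence remains in $I$; and if $x_0$ is on the wrong side, the first step brings it to the correct side while staying in $I$ (this uses goodness crucially). With that in hand the induction above runs cleanly, and the two quantitative bounds are then just the routine computations sketched.
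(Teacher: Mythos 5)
The core inductive structure matches the paper's, and your handling of the bound $\epsilon_{i+1}\le\frac{1}{4M^2}$ goes through. But the step you use to propagate the doubly-exponential bound contains a genuine error: from $M\epsilon_i\le 1$ you conclude $\epsilon_{i+1}\le\epsilon_i^2$, which does not follow. The hypothesis $M\epsilon_i\le1$ gives $M\epsilon_i^2=(M\epsilon_i)\epsilon_i\le\epsilon_i$, whereas $M\epsilon_i^2\le\epsilon_i^2$ requires $M\le1$. For $M>1$ the inequality $\epsilon_{i+1}\le\epsilon_i^2$ can simply fail (e.g.\ $\epsilon_{i+1}=M\epsilon_i^2$ with $M=2$ and $\epsilon_i$ tiny), so your chain $\epsilon_{i+1}\le\epsilon_i^2\le 2^{-2^{i/2+1}}$ breaks, and no amount of case analysis on $M\gtrless\frac14$ repairs it. The paper's fix is to split the exponent differently: $M\epsilon_i^2=\bigl(M\sqrt{\epsilon_i}\bigr)\epsilon_i^{3/2}\le\bigl(M\cdot\tfrac{1}{2M}\bigr)\epsilon_i^{3/2}=\tfrac12\epsilon_i^{3/2}$, using $\epsilon_i\le\frac{1}{4M^2}$ to absorb the factor of $M$; since $3/2>\sqrt2$, one then gets $\tfrac12\bigl(2^{-2^{i/2}}\bigr)^{3/2}=2^{-1-1.5\cdot2^{i/2}}<2^{-\sqrt2\cdot2^{i/2}}=2^{-2^{(i+1)/2}}$. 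The point is that the exponent $2$ is more than you can afford once the constant $M$ must be absorbed, but any exponent strictly above $\sqrt2$ suffices for the stated decay rate; squaring cannot be salvaged for general $M$. (The same $\tfrac12\epsilon_i^{3/2}$ bound also yields $\epsilon_{i+1}<\epsilon_i\le\frac{1}{4M^2}$ in one line, avoiding your separate case analysis.)

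On the ``$x_{i+1}\in I$'' issue you flag: you are right that a bound on $|x_{i+1}-\alpha|$ alone does not place $x_{i+1}$ in $I$ when $\alpha$ sits near an endpoint, and the one-sided monotonicity of Newton's iteration on an interval where $p'$ and $p''$ keep constant sign is the standard repair. Note, however, that the paper's own proof does not address this either --- it reads off ``$x_i\in I$'' directly from the distance bound --- so here you have identified an existing gap rather than introduced one. Since what the lemma needs downstream is only that $p'(x_i)\ne0$ (so the next iterate is defined), your sketched geometric lemma, or an argument that the iterates remain in a neighbourhood of $\alpha$ free of roots of $p'$, would need to be carried out to make either proof fully rigorous; as written you leave it as a sketch.
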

\begin{proof}
We proceed by induction on the number of iterations. For the base case, notice
that $x_0$ is at distance at most 
$\epsilon_0 \leq \min(\frac{1}{4},\frac{1}{4M^2})$ from the root.

Now assume that $\epsilon_i < \min(\frac{1}{4M^2}, 2^{-2^{i/2}})$. Then,
\begin{eqnarray*}
\epsilon_{i+1}  & \leq &  M \epsilon_i^2\\
                &  =   & (M \sqrt{\epsilon_i}) \epsilon_i^{1.5} \\
                & \leq & (M\sqrt{\frac{1}{4M^2}})\epsilon_i^{1.5} \\
                & = &  \frac{1}{2} \epsilon_i^{1.5} \\
		& \leq & 2^{-1} 2^{-1.5\times 2^{i/2}} \\
		& < & 2^{-\sqrt{2}\times 2^{i/2}} \\
		& =  & 2^{-2^{(i+1)/2}}
\end{eqnarray*}
Since $\epsilon_{i+1} \leq \frac{1}{2}\epsilon_i^{1.5}$
and $\epsilon_i^{0.5} < \frac{1}{2^{2^{(i-1)/2}}} < 1$ for $i \geq 0$,
therefore, $\epsilon_{i+1} < \epsilon_i < \frac{1}{4M^2}$ where the second
inequality follows from the inductive assumption. This completes the
proof of the inductive step.
\end{proof}

\begin{lemma}\label{lem:subintervalofI}
 For any integral polynomial $p$ and any good interval $I$ thereof, there 
exists a subinterval $I' \subseteq I$  such that
Newton-Raphson converges quadratically in $I'$.
\end{lemma}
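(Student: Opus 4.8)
The plan is to locate a subinterval $I' \subseteq I$ small enough that the standard quadratic-convergence estimate for Newton-Raphson goes through with an \emph{explicit} constant $M$ depending only on $p$ and $I$. Recall the classical one-step identity: if $x_{i+1} = x_i - p(x_i)/p'(x_i)$ and $\alpha$ is the root, then by Taylor's theorem (Fact~\ref{fact:TaylorSeries}) expanding $p$ around $x_i$ and using $p(\alpha)=0$,
\[
\epsilon_{i+1} = |x_{i+1}-\alpha| = \left|\frac{p''(\xi_i)}{2p'(x_i)}\right|\,\epsilon_i^2
\]
for some $\xi_i$ between $x_i$ and $\alpha$. So the whole task reduces to bounding $|p''|$ from above and $|p'|$ from below on a suitable subinterval. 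The numerator is easy: on the compact interval $I$ the continuous function $|p''|$ is bounded, and in fact Markoff's theorem (Fact~\ref{fact:markoff}) gives an explicit bound $M''$ in terms of $\sup_I |p|$ and $|I|$, where $\sup_I|p|$ is itself bounded via the coefficients and Fact~\ref{fact:rouche}. The denominator is where the care is needed, and it is the main obstacle.

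To control $|p'|$ from below, I would use the defining property of a good interval: $I$ contains the root $\alpha$ but no root of $p'$. Hence $p'$ has constant sign on $I$ and $|p'(\alpha)| > 0$. Since $\alpha$ is a simple root (it is not shared with $p'$), pick $I' = [\alpha - \eta, \alpha + \eta] \cap I$ for $\eta$ small enough that $|p'(x)| \geq \tfrac{1}{2}|p'(\alpha)| =: m' > 0$ for all $x \in I'$; such an $\eta$ exists by continuity of $p'$, and one can even make it explicit using Markoff's bound on $|p''|$ (which bounds how fast $p'$ can change). Then on $I'$ we get $\epsilon_{i+1} \leq M \epsilon_i^2$ with $M = M''/(2m')$. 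Finally shrink $I'$ further so that $|I'| \leq \min(\tfrac{1}{4}, \tfrac{1}{4M^2})$; note $I'$ is still good for $p$ since any subinterval of a good interval containing $\alpha$ is good. By Definition~\ref{def:quadConv}, Newton-Raphson then converges quadratically with parameter $M$ in $I'$.

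The one subtlety I want to flag in the write-up is circularity in the choice of $\eta$ versus $M$: the length constraint $|I'| \leq \tfrac{1}{4M^2}$ involves $M$, which involves $m'$, which is fixed once $\alpha$ is fixed and does not depend on $\eta$ — so there is no actual circular dependence, and one simply takes $\eta$ to be the minimum of the "denominator stays large" radius and $\tfrac{1}{8M^2}$. I would also remark that all of these quantities ($M''$, $m'$, $M$, the final interval length) are constants since $p$ is fixed, which is what makes the later algorithmic claims (finding such an $I'$ in \FTCz) plausible, though the present lemma only asserts existence. The proof is otherwise a routine packaging of the Taylor-remainder estimate together with the two mathematical preliminaries already recorded.
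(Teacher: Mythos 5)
Your proof is correct and follows essentially the same route as the paper's: the Taylor one-step identity, Markoff's theorem to bound $|p''|$ above, positivity of $|p'|$ on the good interval to bound the denominator below, and then shrinking to a subinterval of length at most $\min(\tfrac14,\tfrac1{4M^2})$. The only (immaterial) differences are that the paper takes $\rho_2=\min_I|p'|$ over all of $I$ (using monotonicity of $p'$, since $I$ avoids roots of $p''$) rather than a continuity neighbourhood of $\alpha$, and selects $I'$ constructively by partitioning $I$ and locating the sign change, which you correctly note is not needed for the existence statement itself.
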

\begin{proof}
Let the unique root of the integral polynomial $p$, contained in $I$,
 be $\alpha$. Thus, $p(\alpha) = 0$.
By Taylor's series 
\[
 p(\alpha) = 0 =  p(x_i) + (\alpha - x_i) p'(x_i) + \frac{1}{2}(\alpha -x_i)^2 p''(\xi_i)
\]
where $\xi_i$ is between $x_i$ and $\alpha$. Rearranging, and using the
equation for $x_{i+1}$
\[
\alpha - x_{i+1}  =  (\alpha - x_i) + \frac{p(x_i)}{p'(x_i)} 
                  =  -\frac{1}{2}\frac{p''(\xi_i)}{p'(x_i)}(\alpha - x_i)^2
\]
On the other hand, by Definition~\ref{def:errorNewton-Rapson} the error in the
${i+1}^{th}$ iteration of Newton-Raphson (whenever defined) is:
\[
  \epsilon_{i+1} =  \left|x_{i+1}-\alpha\right|
		 =  \left|\frac{1}{2}\frac{p''(\xi_i)}{p'(x_i)}\right|\epsilon_i^2
\]
Since $p'$ does not have a root in that interval and $p''$ is finite 
(because $p$ is a polynomial), the right hand side is well-defined.

 In the good interval $I$, $p'$ is monotonic and hence the minimum (and maximum)
value of $p'$ is attained at the end-points of $I$.
 Now we can upper bound the absolute value $|p''|$ using upper bound for 
$p'$ and Markoff's result (Fact~\ref{fact:markoff}).
Let this value be denoted by $\rho_1$ and  the minimum value
 of $|p'|$ by $\rho_2 \neq 0$ ($\rho_2 = 0$ would contradict the 
assumption that $I$ does not contain a root of $p'$).
Now, set $M$ to be $\frac{\rho_1}{2\rho_2}$. 

Partition $I$  into sub-intervals of length $\frac{1}{4M^2}$ and
let $I'$ be the unique subinterval containing a root of $p$ :
i.e. the unique sub-interval such that $p$ takes oppositely
signed values at its end points. It is easy to see that Newton-Raphson
converges quadratically (with parameter $M$) in $I'$.
\end{proof}

\section{Putting it all together}\label{sec:proofs}
%\subsection{Proofs of the two main theorems}
%% \begin{theorem}
%%  Let $p$ be a degree $d$, fixed univariate polynomial having integer coefficients,given $n$ in unary $n^{th}$ bit of root of $p$ can be computed 
%%  in \CeqNC$\cap$\TCLL.  
%% \end{theorem}
We now complete the proofs of Theorem~\ref{thm:main} and Theorem~\ref{thm:bbpIrrationals}.

\begin{proof}(of Theorem~\ref{thm:main})
From Lemma~\ref{lem:findGoodInterval} we can compute a good interval $I$.
Then using Lemma~\ref{lem:subintervalofI} we can find a subinterval of $I$
 such that Newton-Raphson will converge quadratically in this interval.

Since Newton-Raphson converges quadratically, in order to obtain an inverse 
exponential error in terms of $n$, (By Lemma~\ref{lem:condQuadConv}) we need
 $O(\lceil\log{n}\rceil)$  iterations. 
Now by Lemma~\ref{lem:compositionGapNCo} and~\ref{lem:compositionTCLL}, 
along with Lemma~\ref{lem:unaryApproxRatio}, we get that
$O(\lceil\log{n}\rceil)$ compositions of Newton-Raphson 
(taking as initial point, the middle  point of the interval $I'$
obtained from Lemma~\ref{lem:subintervalofI}) can 
be computed in \CeqNC$\cap$\FTCLL. 
Finally Lemma~\ref{lem:yap} ensures that we have computed the
 correct bit value.  The argument for the binary case is analogous and uses
Lemma~\ref{lem:binaryApproxRatio} instead of Lemmas~\ref{lem:compositionGapNCo},
~\ref{lem:compositionTCLL}, and \ref{lem:unaryApproxRatio}.
\end{proof}

%% \begin{theorem}
%%  Let $p$ be a degree $d$, fixed univariate polynomial having integer 
%% coefficients,given $n$ in binary $n^{th}$ bit of root of $p$ can be computed 
%%  in $\PH^{\PP^{\PP}}$  
%% \end{theorem}

%% \begin{proof}
%% From Lemma~\ref{lem:findGoodInterval} we can compute good interval $I$ then using Lemma~\ref{lem:subintervalofI} we can find subinterval of $I$
%%  such that Newton-Raphson will converge quadratically in this interval.
%% 
%% Since Newton-Raphson converge quadratically, in order to upper bound exponential error in terms of $n$ (By Lemma~\ref{lem:condQuadConv}) we need
%%  $O(\lceil\log{n}\rceil)$  iteration. Now by Lemma~\ref{lem:compositionPPhierarchy}, $O(\lceil\log{n}\rceil)$ \
%% composition of Newton-Raphson (taking initial point as middle  point of $\delta_0$ length interval obtain from Lemma~\ref{lem:subintervalofI}) can 
%% be computed in \Phie. 
%% 
%% \end{proof}

%% \subsection{Computing reals with bounded irrationality measure}
%% \begin{theorem}
%% Let $\alpha$ be a real number satisfying the following:
%% \begin{enumerate}
%%  \item $\alpha$ has bounded irrationality measure.
%%  \item $\alpha$ can written as convergent series 
%%  \item the $n^{th}$ term (for input $n$ in unary) is in \FTCz.
%% \end{enumerate}
%% Then the $n^{th}$ bit of an irrational number  $\alpha$ can be computed by 
%% by a \TCz\, circuit for $n$ in unary and in $\PH^{\PP^{\PP}}$ for 
%% $n$ in binary.
%% \end{theorem}
\begin{proof} (of Theorem~\ref{thm:bbpIrrationals})
 Let $\alpha$ be a constant have series of the form
$ \alpha=\sum_{k=0}^{\infty}t_k=S_n+R_n $
where we have split the series into a finite sum $S_n=\sum_{k=0}^{n}t_k$ and a remainder series $R_n=\sum_{k=n+1}^{\infty}t_k$.
Each term $t_k$ of series can be written as a rational number of the form
$ t_k=\beta^{-kc}\frac{p(k)}{q(k)}$
where $\beta$ is a real number $p(k),q(k)$ are fixed polynomial with integer coefficients and $c\geq 1$ is an integer.

This series consist of summation of iterated multiplication, division and 
addition which can be computed by \TCz \,\ circuit. Since $\alpha$ has 
bounded irrationality measure so its $n^{th}$ bit can be computed using 
Lemma~\ref{lem:yap}.
\end{proof}

Using the BBP-like series for $\pi$ \cite{BBP} and its bounded
irrationality measure, we get:
\begin{corollary}
 Computing $n^{th}$ bit of $\pi$ is in \TCz and $\PH^{\PP^{\PP}}$,
 given $n$ in unary and binary respectively.
\end{corollary}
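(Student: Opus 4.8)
The plan is to derive the corollary as a direct instantiation of Theorem~\ref{thm:bbpIrrationals}, so the entire task reduces to verifying that $\pi$ satisfies its three hypotheses: (i) $\pi$ has bounded irrationality measure; (ii) $\pi$ is the sum of a convergent series; and (iii) the $m^{th}$ term of that series, for $m$ given in unary, lies in \FTCz. Once these are checked, the conclusion — the $n^{th}$ bit of $\pi$ is in \TCz\ for $n$ in unary and in $\PH^{\PP^{\PP}}$ for $n$ in binary — is exactly what Theorem~\ref{thm:bbpIrrationals} asserts.

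For (ii) and (iii) I would use the BBP series \cite{BBP,BaileyBP}, namely
\[
\pi = \sum_{k=0}^{\infty} \frac{1}{16^k}\left( \frac{4}{8k+1} - \frac{2}{8k+4} - \frac{1}{8k+5} - \frac{1}{8k+6}\right).
\]
After clearing denominators this is precisely of the form $t_k = \beta^{-kc}\,p(k)/q(k)$ used inside the proof of Theorem~\ref{thm:bbpIrrationals}, with $\beta = 16$, $c = 1$, $q(k) = (8k+1)(8k+4)(8k+5)(8k+6)$ and $p$ the matching numerator polynomial, both fixed integer polynomials. For $m$ in unary, $16^{-m}$ is a dyadic rational with $O(m)$ bits and $p(m), q(m)$ are integers with $O(\log m)$ bits, so $t_m$ is a rational whose numerator and denominator have $O(m)$ bits; computing it is a constant number of integer multiplications, additions, and one division on $O(m)$-bit operands, and iterated multiplication, iterated addition, and division are all in \TCz. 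Hence hypothesis (iii) holds, and (ii) is immediate since the series converges (geometrically, at rate $16^{-k}$).

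For (i) I would simply invoke the classical fact from Diophantine approximation that $\pi$ has a \emph{finite} irrationality measure; any of the successively sharper bounds in the literature suffices, since Theorem~\ref{thm:bbpIrrationals} only needs finiteness. With (i)--(iii) established, Theorem~\ref{thm:bbpIrrationals} applies verbatim and the corollary follows. The one point worth flagging as the sole non-routine ingredient — the ``obstacle'' — is precisely hypothesis (i): unlike everything else here it is not elementary and must be imported as a black box. Everything else is a mechanical check that the BBP formula has the required closed form and that per-term arithmetic on $O(m)$-bit integers is in \FTCz; the $S_n/R_n$ tail-splitting needed to invoke Lemma~\ref{lem:yap} is already handled inside the proof of Theorem~\ref{thm:bbpIrrationals}, so nothing further is required.
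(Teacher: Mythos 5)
Your proposal is correct and matches the paper's approach exactly: the paper also obtains the corollary by instantiating Theorem~\ref{thm:bbpIrrationals} with the BBP series for $\pi$ and importing its bounded irrationality measure as a known fact. Your write-up merely makes explicit the per-term \FTCz\ check that the paper leaves implicit.
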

\comment{
\begin{proof}
Follows from the BBP-like series for $\pi$ \cite{BBP} and its bounded
irrationality measure.
%%   $\pi$ can be written in form of series as follows
%% \[
%%  \pi=\Sigma_{k=0}^{\infty}\frac{1}{(16)^k}(\frac{4}{8k+1}-\frac{2}{8k+4}-\frac{1}{8k+5}-\frac{1}{8k+6})=\Sigma_{k\geq0}(16)^{-k}\Phi(k)
%% \]
%% Where $\Phi(k)=\frac{120.k^2+151.k+47}{512.k^4+1024.k^3+712.k^2+194.k+15}$ is fixed rational function.
%% 
%% Above series consist of summation of iterated multiplication,division and addition which can be computed by \TCz \,\ circuit.\qed
\end{proof}
}
%% By an argument similar to that for binary bit value of algebraic numbers 
%% we get,
%% \begin{corollary}
%% Given $n$ in binary computing $n^{th}$ bit of $\pi$ is in \Phie.
%% \end{corollary}
%% 
%% \begin{proof}
%% Series expansion of $\pi$ can be written as ratio of two \SLP, because both numerator and denominator term consist of iterated addition, multiplication,
%% and subtraction. Now as a consequence of Lemma~\ref{lem:compositionPPhierarchy} $n^{th}$ bit of $\pi$ can be computed in \Phie.
%%  \qed
%% \end{proof}
%% Similarly other mathematical constants like $e$ which have BBP-like formula, there $n^{th}$ (given $n$ in binary) bit can be computed in \Phie.

\subsection{Lower Bound} 

Finally we show the $ Mod_p$ (for any odd prime $p$) hardness of the bits
 of a rational. We still don't have
the proof of any kind of hardness of an irrational algebraic number. 

\begin{lemma}
 For given a odd prime $p$ and an integer $X$ (having binary expansion $b_{n-1}\ldots b_0$) then there exist an integer $N$, whose bits
 are constructible by Dlogtime uniform projections, and a fixed rational number $Q$ such 
that  $N^{th}$ digit in binary expansion of $Q$ is $0$ iff  
$\sum_{i}{b_i} \equiv (0 \bmod{p})$.
 \end{lemma}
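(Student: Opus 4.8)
The plan is to pick a single rational $Q$ depending only on $p$ whose fractional binary expansion is purely periodic with period $p$ in a way that encodes divisibility by $p$, and then to route the bits of $X$ into a position $N$ by a ``spreading'' projection that realizes the Hamming weight of $X$ modulo $p$.

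Concretely I would take $Q = \frac{2^{p}-2}{2^{p}-1}$. The first step is the elementary observation that the fractional binary expansion of $Q$ is purely periodic with period $p$, the repeating block being $p-1$ ones followed by a single zero; equivalently, for every $j \ge 1$ the $j$-th bit of $Q$ equals $0$ if and only if $p \mid j$. This is immediate from $Q = 1 - \frac{1}{2^{p}-1}$ together with $\frac{1}{2^{p}-1} = \sum_{k \ge 1} 2^{-kp}$.

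The second step is the encoding. Writing $X = b_{n-1}\cdots b_0$, set
\[
N \;=\; \sum_{i=0}^{n-1} b_i\, 2^{\,i(p-1)}.
\]
By Fermat's little theorem $2^{p-1} \equiv 1 \pmod p$, so $N \equiv \sum_{i=0}^{n-1} b_i \pmod p$; hence $p \mid N$ if and only if $\sum_i b_i \equiv 0 \pmod p$. Combining this with the first step, the $N$-th bit of $Q$ is $0$ exactly when $\sum_i b_i \equiv 0 \pmod p$, which is the assertion. (The only degenerate case is $X = 0$, giving $N = 0$; this is still consistent since the integer part of $Q$ is $0$, and if one prefers $N \ge 1$ one may instead use $N + p\cdot 2^{n(p-1)}$.)

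It remains to check --- and this is the only point that needs care --- that the map sending $X$ to the bit-string of $N$ is a \textsf{Dlogtime}-uniform projection. Since $p$ is a fixed constant, $N$ has at most $(n-1)(p-1)+1 = O(n)$ bits, and bit $k$ of $N$ equals $b_{k/(p-1)}$ when $(p-1)\mid k$ and $0 \le k/(p-1) \le n-1$, and equals $0$ otherwise. In particular each output bit depends on at most one bit of $X$ (it is even a monotone projection), and uniformity holds because deciding whether the $O(\log n)$-bit index $k$ is divisible by the constant $p-1$, and computing the quotient, is a \textsf{Dlogtime} computation. I expect this bookkeeping --- fixing the indexing convention for ``the $N$-th bit'' and matching the construction against the formal definitions of projection and \textsf{Dlogtime} uniformity --- to be the main (indeed essentially the only) obstacle; the number theory involved is entirely routine.
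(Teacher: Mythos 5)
Your proof is correct, and it shares the paper's central device --- spreading the bits $b_i$ of $X$ into binary positions that are powers of a quantity congruent to $1 \bmod p$, so that $N$ is a projection of $X$ and $N \equiv \sum_i b_i \pmod p$ --- but it differs in the two supporting choices. For the spreading, the paper takes any $t$ with $2^t \equiv 1 \pmod p$ (existence via finiteness of the multiplicative group) and sets $N=\sum_i b_i (2^t)^i$, while you specialize to $t=p-1$ via Fermat's little theorem; this is immaterial. The genuine difference is the rational: the paper uses $Q=\sum_{M>0} (M \bmod p)(2^t)^{-M}$, whose digit at position $M$ \emph{in base $2^t$} is $M \bmod p$, so the ``zero digit'' test is really read off in base-$2^t$ blocks of $t$ binary bits (the paper's closed form and its literal ``binary expansion'' phrasing gloss over this), whereas your $Q=\frac{2^p-2}{2^p-1}$ has the cleaner property that its $j$-th \emph{binary} bit is $0$ iff $p \mid j$, which matches the statement of the lemma verbatim, needs only the geometric series for $\frac{1}{2^p-1}$, and (since $2^p-1$ is odd) has a unique binary expansion so there is no ambiguity in ``the $N$-th bit''. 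You also handle the $N=0$ degenerate case explicitly, which the paper leaves implicit, and your uniformity argument (divisibility of an $O(\log n)$-bit index by the constant $p-1$) is the same bookkeeping the paper's construction would require. In short: same reduction, a simpler and slightly more faithful choice of $Q$.
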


\begin{proof}
For a given odd prime $p$ we can find a integer $t$, $0< t <p$ such that $2^t \equiv 1(\bmod{p})$. Such a $t$ exists
because the multiplicative group of integers modulo $p$ is finite.
%
%Such a $t$ will always exist. This is because, the group generated by $Mod_p$ is finite, hence there exist two elements
%$2^{t_1}\equiv 2^{t_2}\bmod{p}$. Therefore $2^{t_1-t_2} \equiv 1 (\bmod{p})$ and $t_1-t_2$ is the required $t$.
%We can find such $t$ by checking remainder from each power of $2$ with $p$ until we get $1$.
%
Consider the number  $N = \sum_{i=0}^{n-1}{b_i(2^t)^i}$.
Then, $ N \equiv \sum_{i=0}^{n-1}{b_i}(\bmod{p})$, because $2^t \equiv 1(\bmod{p})$.
Now, consider the sum: $ Q = \sum_{N > 0}{\frac{N\bmod{p}}{(2^t)^N}}$ $= \frac{2^t(2^{tp}-2^tp+p-1)}{(2^t-1)^2(2^{tp}-1)}$. 
The $ N^{th}$ digit of $ Q $ is  $0$ iff $\sum_{i}{b_i} \equiv 0 \bmod{p}$.
%For completeness sake, we compute,
%\begin{eqnarray*}
 %Q & = & \sum_{N \equiv 1\bmod{p}}{\frac{1}{(2^t)^N}} + \sum_{N \equiv 2\bmod{p}}{\frac{2}{(2^t)^N}} \ldots \sum_{N \equiv (p-1)\bmod{p}}{\frac{(p-1)}{(2^t)^N}}
%\\
  %& = & \frac{(2^t)^p}{(2^t)^p-1}\sum_{i=1}^{p-1}\frac{i}{(2^t)^i}\\
  %& = & \frac{2^t(2^{tp}-2^tp+p-1)}{(2^t-1)^2(2^{tp}-1)}. \\
%\end{eqnarray*}
\end{proof}

\section{Conclusion}\label{sec:concl}
We take the first step in the complexity of Algebraic Numbers. Many questions 
remain.  We focus on fixed algebraic numbers - in general we could consider 
       algebraic numbers defined by polynomials of varying degrees/coefficients.
 We have ignored complex algebraic numbers - they could present new challenges.
 Most importantly, our study is, at best, initial because of the enormous
       gap between lower bounds (virtually non-existent) and the upper bounds.
       Narrowing this gap is one of our future objectives.
\comment{
\begin{itemize}
 \item We focus on fixed algebraic numbers - in general we could consider 
       algebraic numbers defined by polynomials of varying degrees/coefficients.
 \item We have ignored complex algebraic numbers - they could present 
       new challenges.
 \item Most importantly, our study is, at best, initial because of the enormous
       gap between lower bounds (virtually non-existent) and the upper bounds.
       Narrowing this gap is one of our future objectives.
\end{itemize}
} 

\section*{Acknowledgements}
We would like to thank Eric Allender, V. Arvind, Narasimha Chary B, 
Raghav Kulkarni, Rohith Varma and Chee K. Yap for illuminating discussions and 
valuable comments on the draft. We also thank anonymous referees of STACS 2012 
for pointing out an error in the previous version and various stylistic improvements.

\bibliographystyle{plain}
\bibliography{reference}
\end{document}